\pdfoutput=1
\documentclass[11pt,a4paper]{article}

\usepackage[utf8]{inputenc}
\usepackage[T1]{fontenc}
\usepackage{amsmath,amssymb,amsthm}
\usepackage{graphicx}
\usepackage{booktabs}
\usepackage{array}
\usepackage{fancyhdr}
\usepackage{natbib}
\usepackage{algorithm}
\usepackage{algpseudocode}
\usepackage[colorlinks=true,linkcolor=blue,citecolor=blue,urlcolor=blue]{hyperref}
\usepackage[capitalise,noabbrev]{cleveref}
\usepackage{arxiv}
\graphicspath{{figures/}{./}}

\newtheorem{proposition}{Proposition}[section]
\newtheorem{remark}{Remark}[section]

\DeclareMathOperator{\sgn}{sgn}
\DeclareMathOperator{\Cov}{Cov}

\title{Matched generating elements in maximum entropy density reconstruction\thanks{Preprint, 2026.}}
\author{
  \textbf{Serhii Zabolotnii}$^{1,2,3}$ \\[0.2em]
  \small $^{1}$Cherkasy State Business College, 243 Chornovola St., Cherkasy 18028, Ukraine \\
  \small $^{2}$State Scientific Research Institute of Armament and Military Equipment Testing and Certification, Cherkasy, Ukraine \\
  \small $^{3}$Uzhhorod National University, Uzhhorod, Ukraine \\[0.2em]
  \small ORCID: \href{https://orcid.org/0000-0003-0242-2234}{https://orcid.org/0000-0003-0242-2234} \\
  \small Corresponding author: \texttt{zabolotnii.serhii@csbc.edu.ua}
}
\date{}

\begin{document}

\begin{NoHyper}
\maketitle
\end{NoHyper}

\begin{abstract}
Moment-constrained maximum entropy (MaxEnt) reconstructs a density as an exponential family whose sufficient statistics are chosen constraint functions. We study this choice separately from the numerical coordinates used to solve the dual problem. An odd-only family on a symmetric support forces $f(x)f(-x)$ to be constant, so parity matching is necessary. A logarithmic--rational constraint produces the Student/Cauchy family and remains population-defined when required integer moments do not exist; a parity-matched one-parameter path provides a reproducible fractional-exponent alternative containing the monomials. The numerical study uses a common damped-Newton solver and audits the six-distribution benchmark of Rajan et al. Stable parity-block divided differences, residual replay in the original constraint coordinates, and independent adaptive integration identify a raw-coordinate false convergence on D5. Under the corrected computation, the selected PATP arm is within $1.22$ times the oracle on D5 but retains selected-to-oracle error ratios of $30.7$, $8.88$, and $3.33$ on D2, D4, and D6 at $n=12$. These checks preserve the negative selector conclusion without treating a numerical artifact as statistical evidence. The Pearson row is reproduced, whereas the published GOPoly rows are not under the printed protocol. Matched LogRat results show family capacity; neither tested selector supports automatic choice among constraint families.
\end{abstract}

\medskip
\noindent\textbf{Keywords:} maximum entropy; density estimation; moment constraints; heavy-tailed distributions; empirical characteristic function

\medskip
\noindent\textbf{MSC 2020:} 62G07, 94A17, 62G32, 62E10

\medskip
\noindent\textbf{Nomenclature.}
{\small
$f(x)$ probability density; $\mathcal{S}(f)$ Shannon entropy; $\varphi_i$ generating-element (constraint/basis) function; $\mu_i$ constraint target; $\lambda$ Lagrange multipliers; $Z(\lambda)$ partition function; $\Gamma$ dual potential; $H$, $\kappa_H$ dual Hessian and its condition number; $D=[-L,L]$ truncated support; $\alpha$ PATP transition parameter; $p_i(\alpha)$ PATP exponent map; $\psi_X(u)$ characteristic function; $p$, $S$ T-MaxEnt frequency scale and harmonic count; $\epsilon$ ridge parameter; $N$ sample size; $\hat q_\beta$ estimated $\beta$-quantile.
\par}

\section{Introduction}
\label{sec:intro}

Reconstructing a probability density from a finite set of generalized moments is a classical statistical problem. Its foundation is the moment problem \citep{shohat1943problem,akhiezer1965classical}: a finite moment sequence does not determine a density, so a reconstruction rule must add a principle that selects one member of the feasible set. Maximum entropy \citep{jaynes1957information,kapur1989maximum} is the standard such principle, and its solution is the exponential family whose sufficient statistics are the constraint functions \citep{barndorff1978information,csiszar1975divergence}. Entropy densities of this kind are used in statistics and econometrics as flexible density approximants \citep{zellner1988calculation,wu2003calculation} and as conditional-density models for skewness and kurtosis \citep{rockinger2002entropy}. In applied uncertainty evaluation, they convert a few propagated moments into a coverage interval when the Gaussian assumption fails \citep{jcgm100,jcgm101,rajan2018moment,morello2020gum}. In every such use the constraint functions are chosen before the density is fitted, and the accuracy of the reconstruction --- especially in the tails --- is governed by that choice. Which constraint functions to use, and why, is the question this paper addresses.

Among moment-based approximants --- Pearson and Johnson translation systems \citep{johnson1949systems}, Gram--Charlier and Edgeworth expansions \citep{blinnikov1998expansions}, saddlepoint approximations \citep{daniels1954saddlepoint}, and general moment-based approximants \citep{provost2005moment} --- the moment-constrained maximum entropy (MaxEnt) method \citep{jaynes1957information,mead1984maximum} has a distinguished position: it selects the entropy maximizer consistent with the available moments, does not require a named distributional family, and guarantees a nonnegative density, which series expansions do not. These properties have made it a widely used reconstruction engine in expanded uncertainty evaluation \citep{rajan2016benchmark,rajan2018moment} and structural reliability \citep{li2011combined,zhang2013entropy}.

Classical MaxEnt, however, relies on monomial constraints $x^i$, which present two well-documented difficulties. First, \textbf{numerical scaling}: in raw monomial coordinates the basis functions become highly correlated at moderate-to-high orders, and the dual Hessian (a Hankel-type covariance matrix) can be severely ill-conditioned; stabilized reparameterizations address this solver problem without changing the constraint span \citep{bandyopadhyay2005stable,abramov2007improved}. Second, \textbf{heavy-tailed divergence}: for heavy-tailed laws the required integer moments may not exist --- for the Cauchy distribution $\mathbb{E}|X|^{r}$ is finite only for $r<1$, and for Student's $t$ with $\nu$ degrees of freedom only for $r<\nu$ --- so the population constraints of a four-moment monomial MaxEnt are undefined and the sample-based surrogate problem can be infeasible (\cref{sec:exp1}). The fractional-moment MaxEnt (FM-MEM) literature mitigates these difficulties by replacing integer exponents with fractional ones \citep{noviinverardi2003fractional,zhang2013entropy}, but it still leaves the exponent search, signed-support parity, and choice between qualitatively different constraint families unresolved.

Our starting point is the observation that the monomials are not fundamental: they are one possible vector of constraint functions, or \emph{generating element} in the Kunchenko terminology \citep{kunchenko2002polynomial,kunchenko2005approximation,kunchenko2006stochastic}. The MaxEnt density $f\propto\exp(\sum_i\lambda_i\varphi_i)$ is an ordinary exponential family with these functions as its sufficient statistics. We treat their choice as the modeling decision and distinguish it from the coordinates used by the numerical solver. The paper studies three constraint families under a common dual algorithm; its contributions are the parity-admissibility result, a tail-class design map, and a numerical audit that separates family capacity from coordinate-dependent convergence:
\begin{itemize}
  \item \textbf{Fractional-power element} (PATP, \cref{sec:patp}): a one-parameter family in which the parametrically adaptive transition polynomial \citep{zabolotnii2026parametrically} ties all exponents to a single scalar $\alpha\in[0,1]$, so exponent selection becomes a deterministic one-dimensional scan rather than the $m$-dimensional non-convex search of FM-MEM \citep{zhang2013entropy,zhang2020fractional,li2019improved}, with the classical monomials recovered exactly at $\alpha=1$. We call this element \emph{agnostic} because it assumes no tail class; it is parity-matched and therefore applicable on signed supports.
  \item \textbf{Parity-admissibility theorem} (\cref{prop:parity}): any generating element built solely from odd functions --- including the signed-parity (Form-B) basis of the original PATP estimation framework --- yields $f(x)f(-x)\equiv Z^{-2}$ and thus cannot represent any non-uniform symmetric density. Parity matching is therefore necessary for an element intended to represent non-uniform symmetric densities on a symmetric support.
  \item \textbf{Trigonometric (characteristic-function) element} (T-MaxEnt, \cref{sec:tmaxent}): an adaptation of the classical trigonometric-moment MaxEnt machinery --- long standard on the circle and in spectral estimation \citep{burg1975mesa,lang1982mem,gatto2007gvm} --- to PDF reconstruction on truncated supports, with empirical characteristic-function (ECF) plug-in constraints \citep{feuerverger1977ecf} and a CF-amplitude frequency-design rule. Its constraints exist for any probability distribution without moment assumptions, and its basis entries are bounded. The experiments examine where this representational choice is useful; bounded entries alone do not imply a bounded Hessian condition number.
  \item \textbf{Logarithmic--rational element} (\cref{sec:lograt}): the constraint $\log(1+(x/s)^2)$ makes the exponential family $f\propto(1+(x/s)^2)^\lambda$ --- exactly the Student/Cauchy family --- so it represents \emph{algebraic} (power-law) tails, which the fractional and trigonometric elements do not naturally represent --- their fitted log-densities are, respectively, fractional-polynomial and bounded in $|x|$, not logarithmic, so neither yields a power-law decay. This is a \emph{matched} element: it encodes an algebraic-tail assumption and, when that assumption holds, recovers the tail index from a single constraint.
  \item \textbf{Numerical-coordinate audit} (\cref{sec:dual,sec:headtohead}): invertible changes of constraint coordinates preserve the represented density family but not the Hessian condition number. In the Rajan benchmark, parity-block divided differences, residual replay in the original PATP coordinates, and independent adaptive integration identify a raw-coordinate false convergence that a discrete Newton residual alone does not detect.
  \item \textbf{Design map and matching principle} (\cref{sec:genelement,sec:discussion}): a controlled comparison across tail classes (Cauchy, Student-$t$, $\alpha$-stable, Gaussian) that records the empirical capacities and mismatch costs of the three families. The results support tail-informed comparison of constraint families, but they do not establish a universal or automatic routing rule.
\end{itemize}

The same generating-element formalism appears in Kunchenko's parameter-estimation and stochastic-polynomial constructions \citep{kunchenko2002polynomial,kunchenko2006stochastic}. Here it is used more narrowly for MaxEnt density reconstruction: the element supplies the sufficient statistics of an exponential family.

We evaluate the framework in four seeded studies (\cref{sec:results}), with Experiments 1--2 replicated over $20$ seeds. On Cauchy data the fractional element reconstructs the body (KS $0.068\pm0.026$) but not the tail; on a bimodal mixture one scan-selected member improves reconstruction MSE over the monomial baseline, although the same entropy selector fails on other targets; and the matched logarithmic element recovers the Cauchy tail index from one constraint. The external benchmark gives mixed evidence: Pearson is reproduced, the published GOPoly rows are not reproduced by the literal printed protocol, and truth-matched LogRat improves a heavy-tail domain-extension panel. Cross-family automatic selection is an unresolved model-selection problem; the matched LogRat results are oracle-family evidence, not an operational routing rule.

The remainder of the paper is organized as follows. \Cref{sec:related} reviews related work. \Cref{sec:math} develops the dual formulation and the three generating elements. \Cref{sec:numerical} details the implementation. \Cref{sec:results} reports the four experiments, including the Rajan benchmark audit (\cref{sec:headtohead}); \cref{sec:discussion} distills the design map, \cref{sec:limitations} states limitations, and \cref{sec:conclusion} concludes.

\section{Related Work}
\label{sec:related}

\subsection{Moment constraints, fractional orders, and relaxed feasibility}
The use of fractional moments as MaxEnt constraints originates with \citet{noviinverardi2003fractional}, who showed that a few selected fractional moments can replace many integer moments and proposed an Akaike-type selection rule. \citet{gzyl2010hausdorff} used fractional moments synthesized from an integer-moment sequence to reduce the ill-conditioning of high-order Hausdorff recovery, while \citet{cottone2009fractional} connected complex fractional moments to the characteristic function through the Mellin transform. In structural reliability, FM-MEM combines fractional exponents with dimensional reduction and treats the exponents and Lagrange multipliers as a non-convex optimization problem \citep{zhang2013entropy}. Later variants use Laplace transforms, stratified moment estimation, genetic search, polynomial surrogates, and iterative integration \citep{li2019improved,xu2019novel,zhang2020fractional,li2024improved,wang2025iterative}; multimodal extensions are also available \citep{li2022multimodal}. In a different sampling problem, \citet{aleagobe2023length} compared maximum entropy and empirical likelihood under moment conditions for length-biased data.

Moment relaxation addresses another failure mode. \citet{nghiem2026relaxed} formulated continuous-domain density estimation with relaxed moment constraints as a convex Kullback--Leibler problem and derived a finite-dimensional dual. Relaxation can absorb sampling error or inconsistent targets; it does not choose the constraint map itself. Our study keeps exact fitted constraints, varies the map, and reports when empirical targets fall outside the realizable set. PATP-MaxEnt likewise does not claim to introduce optimized fractional orders. It ties all exponents to one scalar $\alpha$, making the search reproducible and one-dimensional, and uses a parity-matched signed basis rather than the positive-support powers common in FM-MEM.

\subsection{Selecting constraint functions and numerical coordinates}
The closest data-adaptive approach is MESSY estimation \citep{tohme2024messy}. It uses symbolic regression to search for smooth functions in the MaxEnt exponent and screens candidates for reconstruction accuracy and conditioning. Our objective is different: we compare a small collection of prespecified, interpretable constraint families and ask which distribution classes they can represent. A direct numerical comparison would require matching MESSY's symbolic function library and search budget to this prespecified-family study; we therefore treat it as a conceptual comparator rather than an empirical baseline. The parity obstruction is a representational statement, whereas the stable algorithms of \citet{bandyopadhyay2005stable} and \citet{abramov2007improved} address coordinates used to solve a fixed moment problem. Because an invertible coordinate change can alter the Hessian condition number without changing the fitted density, we replay residuals in the original constraints and do not rank families by raw condition numbers.

\begin{table}[t]
  \centering
  \caption{Closest MaxEnt approaches and the design layer addressed by each.}
  \label{tab:related-comparison}
  \footnotesize
  \setlength{\tabcolsep}{3pt}
  \begin{tabular}{>{\raggedright\arraybackslash}p{0.20\linewidth}>{\raggedright\arraybackslash}p{0.29\linewidth}>{\raggedright\arraybackslash}p{0.43\linewidth}}
    \toprule
    Approach & Constraint treatment & Boundary relative to this paper \\
    \midrule
    Fractional-moment MaxEnt \citep{noviinverardi2003fractional,zhang2013entropy}
      & Selects or optimizes fractional orders
      & Primarily positive-support powers; no general parity condition or cross-family tail map. \\
    MESSY \citep{tohme2024messy}
      & Searches symbolically over smooth exponent functions
      & Data-adaptive basis discovery; no prespecified tail-family interpretation or deployable comparison with the three families studied here. \\
    Relaxed MaxEnt \citep{nghiem2026relaxed}
      & Relaxes noisy or inconsistent moment targets
      & Changes constraint tolerance rather than the sufficient-statistic map. \\
    Weak Fourier constraints \citep{soize2023fourier}
      & Enforces characteristic-function information in a weak Kullback--Leibler formulation
      & Targets learning from realizations, including high-dimensional settings; not a finite harmonic reconstruction with an ECF frequency rule. \\
    Present work
      & Compares fractional, harmonic, and logarithmic--rational maps
      & Provides parity and tail-representation results plus numerical replay; cross-family automatic selection remains unresolved. \\
    \bottomrule
  \end{tabular}
\end{table}

\Cref{tab:related-comparison} separates changes to the constraint map from changes to target tolerance or numerical coordinates. None of the listed approaches combines the parity restriction, the tail-family representation result, and original-coordinate numerical replay studied here.

\subsection{Fourier constraints and tail models}
MaxEnt under trigonometric constraints is classical in spectral estimation \citep{burg1975mesa,lang1982mem}, image reconstruction from incomplete Fourier data \citep{gull1978image}, and circular statistics \citep{gatto2007gvm}. On bounded intervals, \citet{barron1991exponential} established convergence rates for exponential families with trigonometric log-density expansions. More recently, \citet{soize2023fourier} imposed probability-measure constraints through a weak formulation of the Fourier transform under Kullback--Leibler minimization. T-MaxEnt is narrower: it uses a finite real--imaginary harmonic vector on a truncated univariate support, ECF plug-in targets with established sampling theory \citep{feuerverger1977ecf}, and a sample-size-dependent frequency screen. Bounded basis entries guarantee moment existence but do not guarantee a well-conditioned Hessian.

Power-law tails can instead be obtained by changing the entropy functional: Tsallis entropy yields $q$-exponential maximizers \citep{tsallis1988possible}. Our logarithmic--rational element keeps Shannon entropy fixed and puts the algebraic-tail assumption into the constraint. Generalized maximum entropy \citep{golan1996maximum}, boundary-corrected KDE \citep{jones1993simple}, and positive orthogonal-series estimators \citep{efromovich2010orthogonal} provide broader alternatives. We do not claim superiority over these estimator classes; a tuned comparison under matched information remains outside the present study.

\subsection{Generating-element interpretation and scope}
\label{sec:related_genelement}
Kunchenko's decomposition-space terminology calls the chosen function vector a \emph{generating element} \citep{kunchenko2002polynomial,kunchenko2005approximation,kunchenko2006stochastic}. In the original polynomial-maximization setting it supports parameter estimation; here it names the sufficient statistics of a standard MaxEnt exponential family. The existence, uniqueness, and convex-duality results remain the usual ones for moment-constrained MaxEnt and information projection \citep{borwein1991convergence,csiszar1975divergence}. Thus ``generating element'' can be read throughout as ``chosen vector of constraint functions''; the paper's new questions concern parity, tail representation, and the numerical consequences of that choice.

\section{Mathematical Formulation}
\label{sec:math}

\subsection{Dual formulation of constrained maximum entropy}
\label{sec:dual}
Let $X$ be a random variable defined on a support $D \subseteq \mathbb{R}$ with PDF $f(x)$. The goal is to maximize the Shannon entropy
\begin{equation}
  \mathcal{S}(f) = - \int_{D} f(x) \ln f(x) \, dx,
  \label{eq:entropy}
\end{equation}
subject to the normalization constraint $\int_{D} f(x) \, dx = 1$ and a set of generalized moment constraints defined by a \emph{generating element} $\Phi = \{\varphi_i\}_{i=1}^n$ --- the family of functions that spans the decomposition space in the Kunchenko sense \citep{kunchenko2005approximation,kunchenko2006stochastic}:
\begin{equation}
  \int_{D} \varphi_i(x) f(x) \, dx = \mu_i, \quad i = 1, \dots, n.
  \label{eq:constraints}
\end{equation}
The monomials $\varphi_i(x)=x^i$ are the classical choice; \cref{sec:patp,sec:tmaxent,sec:lograt} develop three others. Everything in the present subsection is independent of the choice of element.
Using the method of Lagrange multipliers, the resulting PDF has the exponential-family form
\begin{equation}
  f(x; \lambda) = \frac{1}{Z(\lambda)} \exp\left( \sum_{i=1}^n \lambda_i \varphi_i(x) \right),
  \label{eq:pdf_form}
\end{equation}
where $Z(\lambda) = \int_D \exp\left(\sum_k \lambda_k \varphi_k(x)\right) dx$ is the partition function. The dual potential to be minimized is
\begin{equation}
  \Gamma(\lambda) = \ln Z(\lambda) - \sum_{i=1}^n \lambda_i \mu_i,
  \label{eq:dual_pot}
\end{equation}
with gradient and Hessian
\begin{equation}
  \frac{\partial \Gamma}{\partial \lambda_i} = \mathbb{E}_f[\varphi_i(X)] - \mu_i,
  \qquad
  H_{ij} = \frac{\partial^2 \Gamma}{\partial \lambda_i \partial \lambda_j} = \Cov_f\bigl(\varphi_i(X), \varphi_j(X)\bigr).
  \label{eq:grad_hess}
\end{equation}
Since the Hessian is a covariance matrix, it is symmetric and positive semi-definite, so the dual problem is convex. Convexity alone, however, does not guarantee a solution: existence and uniqueness of the minimizer additionally require $Z(\lambda)<\infty$ on a neighborhood of the iterates, linearly independent constraint functions, and a target vector $\mu$ in the interior of the realizable moment set \citep{borwein1991convergence,mead1984maximum}. Each of these conditions fails somewhere in our experiments, and we report the failures explicitly (\cref{sec:exp1}).

Conditioning is not an invariant property of the represented density family. If an invertible coordinate change defines $\psi=A\varphi$, then its Hessian satisfies
\begin{equation}
  H_{\psi}=A H_{\varphi} A^{\mathsf T}.
  \label{eq:hessian_congruence}
\end{equation}
This congruence preserves positive definiteness but not eigenvalues or condition number; simple rescaling can change $\kappa(H)$ arbitrarily while leaving the fitted density unchanged after the inverse multiplier transformation. We therefore use condition numbers only as diagnostics for a specified basis scaling and solver, never as evidence that one generating-element family is intrinsically better conditioned than another.

One condition deserves emphasis for signed bases. If the exponent $\sum_i \lambda_i\varphi_i(x)$ is an odd function --- as for the basis in \cref{eq:patp_basis_pm} below with only odd-indexed terms active --- then one of its tails grows without bound, and $Z(\lambda)=\infty$ on $\mathbb{R}$ for every $\lambda\neq 0$. MaxEnt with such bases is therefore well-posed only on a bounded support $D=[-L,L]$, and the truncation length $L$ is part of the model specification, not a mere numerical grid choice. All experiments below state $L$ explicitly.

\subsection{The fractional-power element (PATP): an agnostic one-parameter family}
\label{sec:patp}
The first generating element is a one-parameter family of parity-matched fractional powers. It is the \emph{agnostic} choice --- it encodes no assumption about the target's tail class --- and it contains the classical monomials as a special case. The PATP framework \citep{zabolotnii2026parametrically} defines a quadratic exponent map controlled by a transition parameter $\alpha \in [0, 1]$:
\begin{equation}
  p_i(\alpha) = \frac{1}{i} + \left(4 - i - \frac{3}{i}\right)\alpha + \left(2i - 4 + \frac{2}{i}\right)\alpha^2,
  \label{eq:patp_exponent}
\end{equation}
which satisfies $p_i(0) = 1/i$ (fractional regime), $p_i(0.5) = 1$ for all $i$, and $p_i(1) = i$ (integer regime). In the original estimation-theoretic setting the map is instantiated through the signed-parity (Form-B) element $\sgn(x)|x|^{p_i(\alpha)}$. For density reconstruction, however, an all-odd element is structurally inadmissible --- and the obstruction applies to \emph{every} generating element, the PATP one included:

\begin{proposition}[Parity admissibility]
\label{prop:parity}
Let every function $\varphi_i$ of the generating element in \cref{eq:pdf_form} be odd on a symmetric support $D=[-L,L]$. Then the fitted density satisfies $f(x;\lambda)\,f(-x;\lambda) = Z(\lambda)^{-2}$ for all $x$; in particular, the only symmetric density attainable is the uniform density on $D$.
\end{proposition}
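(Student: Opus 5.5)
The plan is to read the identity directly off the exponential-family form \cref{eq:pdf_form}, using only that an odd exponent changes sign under reflection. Write $g(x)=\sum_{i=1}^n\lambda_i\varphi_i(x)$. Since each $\varphi_i$ is odd and the support $D=[-L,L]$ is symmetric, $g$ is well defined on $D$ and satisfies $g(-x)=-g(x)$. Because $D$ is bounded and the $\varphi_i$ are taken to be integrable (bounded in the cases of interest, e.g.\ $\sgn(x)|x|^{p_i(\alpha)}$), we have $0<Z(\lambda)<\infty$. This is exactly the truncation remark made after \cref{eq:hessian_congruence}, which explains why the statement is posed on $[-L,L]$ rather than on $\mathbb{R}$.

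The product identity is then a one-line computation:
\[
f(x;\lambda)\,f(-x;\lambda)=\frac{e^{g(x)}\,e^{g(-x)}}{Z(\lambda)^2}=\frac{e^{g(x)-g(x)}}{Z(\lambda)^2}=Z(\lambda)^{-2},
\]
valid for every $x\in D$.

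For the symmetric-density consequence, suppose the fitted density is symmetric, so $f(-x;\lambda)=f(x;\lambda)$. The identity becomes $f(x;\lambda)^2=Z(\lambda)^{-2}$. Since $f>0$, this gives $f(x;\lambda)=Z(\lambda)^{-1}$, so $f$ is constant on $D$. Normalization then forces $f\equiv 1/(2L)$, and hence $Z(\lambda)=2L$.

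An equivalent route is that symmetry requires $g(-x)=g(x)$. Combined with oddness this gives $g\equiv0$, so $f$ is uniform; if the $\varphi_i$ are linearly independent, this further forces $\lambda=0$. I would mention this only as a remark. Uniform is indeed attained (at $\lambda=0$), so "the only symmetric density attainable" is the correct characterization.

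No step is a genuine obstacle. The only care needed concerns the function class and the meaning of "for all $x$." If the $\varphi_i$ are merely measurable, the identity holds pointwise wherever $g$ is defined, and the uniform conclusion holds almost everywhere, which is all that is meaningful for densities. At $x=0$, oddness gives $g(0)=0$ whenever $g$ is defined there, so $f(0)=Z^{-1}$, consistent with the identity. I would state the result under a finiteness assumption on $Z$ and note the a.e.\ qualification in a sentence.
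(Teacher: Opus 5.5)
Your proposal is correct and matches the paper's proof: the paper likewise adds $\ln f(x)+\ln f(-x)$, cancels the odd exponent to get $-2\ln Z$, and, under symmetry, concludes $f(x)^2=Z^{-2}$, so $f$ is constant. Your additions (normalization giving $f\equiv 1/(2L)$, the a.e.\ qualification) are harmless refinements. Note that integrability of the $\varphi_i$ alone does not ensure $Z<\infty$, so stating the result under an explicit finiteness assumption on $Z$, as you propose, is the right way to handle it.
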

\begin{proof}
$\ln f(x) + \ln f(-x) = -2\ln Z + \sum_i \lambda_i\bigl(\varphi_i(x) + \varphi_i(-x)\bigr) = -2\ln Z$ since each $\varphi_i$ is odd. If additionally $f(x)=f(-x)$, then $f(x)^2 = Z^{-2}$, so $f$ is constant.
\end{proof}

The log-density of an all-odd element has no even component, so such an element cannot constrain dispersion- or kurtosis-type behavior at all; a usable element must contain even functions. We therefore define the fractional-power (PATP) element through the \emph{parity-matched} form
\begin{equation}
  \varphi_1(x; \alpha) = x, \qquad
  \varphi_i(x; \alpha) =
  \begin{cases}
    |x|^{p_i(\alpha)}, & i \text{ even},\\[2pt]
    \sgn(x)\,|x|^{p_i(\alpha)}, & i \text{ odd},
  \end{cases}
  \quad i = 2, \dots, n,
  \label{eq:patp_basis_pm}
\end{equation}
which preserves the PATP exponent map while restoring both parities. We refer to PATP-MaxEnt built on this parity-matched basis as \emph{PM-PATP} throughout. At $\alpha = 1$ the basis recovers the classical monomials exactly for every $i$ (since $|x|^{i} = x^{i}$ for even $i$ and $\sgn(x)|x|^{i} = x^{i}$ for odd $i$), so the classical MaxEnt is the $\alpha=1$ member of the family and all comparisons against it are internally consistent.

\begin{remark}[Degeneracy at $\alpha = 0.5$]
\label{rem:degenerate}
Since $p_i(0.5)=1$ for all $i$, the basis at $\alpha=0.5$ collapses to copies of $x$ and $|x|$ (design-matrix rank $2$); the corresponding fit is a two-constraint exponential tilt whose Hessian is singular up to the ridge term. This point is flagged as degenerate in all reported sweeps.
\end{remark}

\paragraph{Moment existence.} For the standard Cauchy law $\mathbb{E}|X|^{q}<\infty$ if and only if $q<1$. All PATP exponents satisfy $p_i(\alpha)<1$ for $i\ge 2$ exactly when $\alpha<0.5$, so the population fractional moments of the constraints $\varphi_i$, $i \ge 2$, exist precisely in that range. The always-included linear constraint $\varphi_1(x)=x$ has \emph{no} finite population expectation for the Cauchy law; in the experiments its target (like all others) is an empirical average over the full sample (including any draws outside $[-L,L]$), while the fitted density lives on the truncated support, so the reconstruction must be interpreted as a sample-conditional surrogate fit (\cref{sec:exp1}).

\paragraph{Selection of $\alpha$.} For each candidate $\alpha$ on the uniform grid $\{0, 0.1, \dots, 1.0\}$ (the degenerate point $\alpha=0.5$ of \cref{rem:degenerate} is excluded), the constraint targets $\mu_i(\alpha) = \frac{1}{N}\sum_{j} \varphi_i(x_j;\alpha)$ are recomputed from the \emph{full} sample --- every draw $x_j$ contributes, including any that fall outside the truncation interval $[-L,L]$ (this matters in \cref{sec:exp1}, where retained extreme draws are precisely what can push a target outside the attainable range and render the monomial problem infeasible). The dual problem is then solved by Newton's method (\cref{alg:newton}), and the parameter is selected by
\begin{equation}
  \alpha^{\ast} = \arg\min_{\alpha} \Gamma\bigl(\lambda^{\ast}(\alpha); \alpha\bigr).
  \label{eq:opt_alpha}
\end{equation}
At the converged solution $\lambda^{\ast}$, where the constraints $\mathbb{E}_f[\varphi_i] = \mu_i$ are met, $\Gamma(\lambda^{\ast}(\alpha);\alpha)$ equals the Shannon entropy of the fitted density; away from convergence the two are related by $\Gamma = \mathcal{S}(f) + \sum_i \lambda_i\bigl(\mathbb{E}_f[\varphi_i] - \mu_i\bigr)$, the extra term being $\lambda$ contracted with the constraint residual. We apply \cref{eq:opt_alpha} only to converged candidates on a common support, but it remains a model-selection heuristic, not a consequence of MaxEnt and not a validated data-driven selector. In the Rajan audit (\cref{sec:headtohead}), its selected-to-oracle error ratios at $n=12$ are $30.7$, $8.88$, and $3.33$ on D2, D4, and D6, respectively. On D5, by contrast, the ratio is $1.22$ ($0.000428\%$ selected versus $0.000350\%$ oracle). We therefore report selected and oracle-family results separately. The truncation length $L$ is a user-specified modeling assumption, fixed a priori from the expected range of the measurand (here $L=50$ for the Cauchy benchmark and $L=5$ for the mixture, both set before fitting); we quantify its effect in the supplementary material and discuss the resulting $L$-dependence in \cref{sec:limitations}.

\subsection{The trigonometric (characteristic-function) element: bounded constraints for multimodal targets}
\label{sec:tmaxent}
The second generating element is trigonometric, derived from the characteristic function $\psi_X(u) = \mathbb{E}\bigl[e^{iuX}\bigr]$, following the characteristic-function constructions of the Kunchenko school \citep{zabolotnii2026variance}. It is a natural candidate for bounded, multimodal, light-tailed targets. The basis functions are
\begin{equation}
  \varphi_{2r-1}(x) = \cos(r p x), \qquad \varphi_{2r}(x) = \sin(r p x), \qquad r = 1, \dots, S,
  \label{eq:trig_basis}
\end{equation}
where $p$ is a frequency scaling parameter, and the target moments are the real and imaginary parts of the CF at the harmonic frequencies,
\begin{equation}
  \mu_{2r-1} = \operatorname{Re}\{\psi_X(r p)\}, \qquad \mu_{2r} = \operatorname{Im}\{\psi_X(r p)\}, \qquad r = 1,\dots,S.
  \label{eq:trig_moments}
\end{equation}
In practice the targets are estimated by the ECF, $\hat\psi_N(u) = \frac{1}{N}\sum_j e^{iux_j}$, whose uniform almost-sure convergence and asymptotic normality are classical \citep{feuerverger1977ecf}; the plug-in constraints therefore have bounded influence and known sampling theory, in contrast to high-order sample moments. Because trigonometric functions are uniformly bounded between $-1$ and $1$, every Hessian entry in \cref{eq:grad_hess} is bounded in magnitude. This does not bound its condition number: nearly collinear harmonics remain possible, and \cref{eq:hessian_congruence} precludes cross-coordinate rankings from raw $\kappa_H$. Two properties still matter for heavy-tailed targets: the constraints \cref{eq:trig_moments} exist for \emph{every} distribution (no moment conditions), and the ECF noise floor is $O(1/\sqrt{N})$ uniformly in $u$.

\paragraph{Frequency design rule.} The harmonics must carry information: if $|\psi_X(Sp)|$ is at or below the ECF noise floor ($\approx 1/\sqrt{N}$), the highest-order constraints degenerate into numerical noise. We turn this into an automated, sample-size-aware rule in two steps. First, \emph{admissibility}: a configuration $(p,S)$ is admitted only if every harmonic clears the noise floor by a confidence margin, $|\hat\psi_N(jp)| \ge 3/\sqrt{N}$ for $j=1,\dots,S$ (at $N=1000$ this threshold is $0.095$; a fixed $0.05$ threshold would be its $\approx 1.6\sigma$ special case). Second, \emph{selection}: because clearing the floor alone is not enough (a coarse fundamental can clear it and still fit poorly), we choose among admissible configurations the one minimizing a held-out predictive log-score, the same cross-validated criterion used in the supplementary selection study. \Cref{sec:exp2} shows this automated rule roughly halves the accuracy penalty of the fixed-threshold default while needing no hand-set amplitude.

\subsection{The logarithmic--rational element: a matched element for algebraic tails}
\label{sec:lograt}
The two elements above are \emph{agnostic}: neither assumes a tail class, and (as \cref{sec:exp1} shows) neither can reproduce a power-law tail. The reason is structural. The MaxEnt density is $f\propto\exp(\sum_i\lambda_i\varphi_i)$, so to obtain an algebraic tail $f(x)\sim |x|^{-\beta}$ the log-density must contain a $-\beta\log|x|$ term --- but a finite combination of fixed powers $|x|^{p_i}$ produces stretched-exponential, not algebraic, tails, and a bounded trigonometric polynomial cannot grow at all; neither reproduces the logarithmic growth of $\log|x|$ over an unbounded range. The remedy is to put the logarithm into the element itself. We take the single \emph{logarithmic--rational} constraint
\begin{equation}
  \varphi^{\mathrm{LR}}_s(x) = \log\!\bigl(1 + (x/s)^2\bigr),
  \label{eq:lograt}
\end{equation}
for which the MaxEnt density is exactly
\begin{equation}
  f(x;\lambda) = \frac{1}{Z}\bigl(1+(x/s)^2\bigr)^{\lambda},
  \label{eq:lograt_density}
\end{equation}
i.e.\ the Student/Cauchy family: $\lambda=-1$ is the standard Cauchy (scale $s$), and a general $\lambda$ gives a Student-$t$ shape with tail exponent $f(x)\sim|x|^{2\lambda}$. Normalizability fixes the admissible range of $\lambda$ and depends on the support: on the whole line $D=\mathbb{R}$, $Z<\infty$ requires $\lambda<-\tfrac12$ (so that $|x|^{2\lambda}$ is integrable at infinity), whereas on a bounded support $D=[-L,L]$ the integral is finite for every $\lambda$ and the constraint $L$ again enters the model specification, exactly as for the signed bases of \cref{sec:dual}. The element is even (parity-admissible by \cref{prop:parity}), and its generalized moment $\mathbb{E}[\log(1+(X/s)^2)]$ is \emph{finite for the Cauchy law} even though no integer or fractional power moment of order $\ge 1$ is --- e.g.\ $\mathbb{E}_{\mathrm{Cauchy}}[\log(1+X^2)] = 2\ln 2$. Under the matched model itself the same moment has the closed form $\mathbb{E}_f[\log(1+(X/s)^2)] = \psi(-\lambda)-\psi(-\lambda-\tfrac12)$ (digamma $\psi$; consistent with $2\ln 2$ at $\lambda=-1$), so a single-constraint fit on the whole line $\mathbb{R}$ reduces to a one-dimensional root-find with no quadrature or truncation (\cref{sec:genelement}). Several scales $s$ may be stacked, $\{\varphi^{\mathrm{LR}}_{s_k}\}$, giving $f\propto\prod_k(1+(x/s_k)^2)^{\lambda_k}$, a rational family that interpolates tail index and body width. This is a \emph{matched} element: it presupposes an algebraic tail. \Cref{sec:genelement} shows it recovers the Cauchy tail index from a single constraint, and \cref{sec:discussion,sec:limitations} record its converse cost --- on a light-tailed (Gaussian) target it is the wrong element and underperforms the agnostic ones. We contrast it with a \emph{bounded} rational element $1/(1+(x/s)^2)$, whose exponent vanishes as $|x|\to\infty$ and which therefore flattens rather than reproduces the tail.

\section{Numerical Implementation}
\label{sec:numerical}

\subsection{Damped Newton solver for the dual problem}
\label{sec:newton}
We minimize \cref{eq:dual_pot} with the damped Newton iteration of \cref{alg:newton}. All integrals are evaluated by the rectangle rule on a fixed grid (resolutions in \cref{sec:results}). A ridge term $\epsilon I$ is added to the Hessian before solving the Newton system, to keep it invertible under near-collinear constraints; \cref{sec:exp2} verifies that the reported results are insensitive to $\epsilon$ over four orders of magnitude. The backtracking line search accepts a step when the dual potential does not increase by more than a slack tolerance ($10^{-4}$, which absorbs the rectangle-rule quadrature error in $\Gamma$); if all ten trials fail, the solver takes a single damped step of length $0.1$ as a heuristic continuation and proceeds. This fallback cannot invalidate any reported result: convergence is certified \emph{only} by the gradient-norm test (line~\ref{lin:converge} of \cref{alg:newton}), which is evaluated on the iterate itself and is independent of how that iterate was reached, so an uphill fallback step can at worst lead to a \emph{reported non-convergence}, never to a false ``converged'' verdict or an accepted infeasible fit. Convergence from arbitrary starting points is accordingly \emph{not} guaranteed; the solver reports failure when the unnormalized density overflows, the partition function vanishes, the Newton system is singular, or the gradient tolerance is not met within the iteration budget, and every such outcome is reported explicitly in the experiments (a more conservative alternative would replace the fallback with an immediate Armijo failure return, at the cost of a few additional reported non-convergences).

\begin{algorithm}[t]
\caption{Damped Newton iteration for the MaxEnt dual problem}
\label{alg:newton}
\begin{algorithmic}[1]
\Require basis matrix $\Phi \in \mathbb{R}^{G\times n}$ on the grid, targets $\mu$, grid step $\Delta x$, ridge $\epsilon = 10^{-8}$, tolerance $10^{-6}$, $\text{maxit} = 100$
\State $\lambda \gets 0$
\For{$k = 1, \dots, \text{maxit}$}
  \State $w \gets \exp(\Phi\lambda)$; \textbf{fail} if any entry overflows \label{lin:overflow}
  \State $Z \gets \mathbf{1}^{\top} w \,\Delta x$; \textbf{fail} if $Z = 0$; \quad $f \gets w / Z$
  \State $g \gets \Phi^{\top} f \,\Delta x - \mu$ \Comment{dual gradient, \cref{eq:grad_hess}}
  \If{$\lVert g \rVert_2 < \text{tol}$} \Return $\lambda$ \Comment{converged} \label{lin:converge}
  \EndIf
  \State $H \gets \Phi^{\top}(\Phi \circ f)\,\Delta x - m m^{\top}$ \Comment{$m = \Phi^{\top} f \Delta x$; unregularized Hessian}
  \State solve $(H + \epsilon I)\, d = g$ (QR fallback); \textbf{fail} if singular
  \State backtracking: find largest $t \in \{1, \tfrac12, \dots, 2^{-9}\}$ with $\Gamma(\lambda - t d) \le \Gamma(\lambda) + 10^{-4}$
  \State $\lambda \gets \lambda - t d$ \quad (if no $t$ accepted: $\lambda \gets \lambda - 0.1\, d$)
\EndFor
\State \Return last iterate, flagged as not converged
\end{algorithmic}
\end{algorithm}

Throughout the paper, any reported $\kappa_H$ is the exact 2-norm condition number of the \emph{ridge-regularized} Hessian $H + \epsilon I$ in the stated coordinates at the final iterate. It is a solver diagnostic only (\cref{eq:hessian_congruence}); without the ridge, the Hessian at $\alpha=0.5$ would be singular (\cref{rem:degenerate}).

\section{Numerical Experiments}
\label{sec:results}

We evaluate the framework in four experiments, organized around four research questions:
\begin{description}
  \item[RQ1] Does PATP-MaxEnt restore solver feasibility and useful body reconstruction for a heavy-tailed law where classical monomial MaxEnt is infeasible? (\cref{sec:exp1})
  \item[RQ2] How do the parity-matched PATP family, T-MaxEnt, and polynomial baselines compare on a multimodal target, and does the one-dimensional $\alpha$-scan select a good member? (\cref{sec:exp2})
  \item[RQ3] Across tail classes (Cauchy, Student-$t$, $\alpha$-stable, Gaussian), what can a logarithmic element recover when the algebraic-tail class is supplied, and what is the cost when that assumption is wrong? (\cref{sec:genelement})
  \item[RQ4] Can the published Pearson and GOPoly rows of \citet{rajan2018moment} be reproduced from the printed protocol, and what happens when GOPoly, T-MaxEnt, and PATP use identical supports and constraint budgets? How should separate heavy-tail and real-data panels be interpreted outside that population-moment benchmark? (\cref{sec:headtohead})
\end{description}

\paragraph{Reproducibility.} All computations use R~4.5.3, single-threaded, on an Apple M-series CPU. Experiments 1--3 use base R; Experiment 4 additionally uses \texttt{PearsonDS} \citep{pearsonDS}, and its real-data panel uses the base-R \texttt{EuStockMarkets} series \citep{rcore}. The supplementary selector check uses \texttt{diptest} \citep{hartigan1985dip}. Experiments 1--3 use seed $20260612$ for the single-realization tables, and Experiments 1--2 additionally use seeds $1$--$20$ for replication (\cref{sec:repli}); the heavy-tail panel uses seeds $20260701$--$20260710$. Experiments 1--3 use the stated rectangle grids and the solver of \cref{alg:newton}. The Rajan audit instead uses analytical moments and quantiles, 480-point Gauss--Legendre quadrature, a 50,001-point inversion grid, and dynamic reorthogonalization. Its generated files retain per-level errors, alternative aggregations, convergence records, and the complete PATP scan. The scripts and tests are provided as supplementary material (see Data and Code Availability).

\subsection{Experiment 1 (RQ1): heavy-tailed law --- standard Cauchy}
\label{sec:exp1}
We generate $N = 1000$ samples from a standard Cauchy distribution and fit $n=4$ constraints, with the fitted density and all integrals supported on the truncated grid $[-50, 50]$ ($2000$ points, $\Delta x \approx 0.05$). The constraint targets are \emph{empirical averages over the full sample} --- every one of the $N=1000$ draws contributes $\varphi_i(x_j;\alpha)$, including any draw with $|x_j|>50$; this is precisely why a target such as $\hat m_4$ can exceed the maximum value attainable on $D$ and so render the monomial problem infeasible (the same convention is used throughout, \cref{sec:patp}). For the Cauchy law the population counterparts of $\varphi_1$ (and, for $\alpha \ge 0.5$, of the higher constraints) do not exist, so the fit is a sample-conditional surrogate. The true Cauchy law places only $1.27\%$ of its mass outside $[-50,50]$. Accuracy is measured against the true law by (i) the body MSE, the mean squared density error on $[-10,10]$; (ii) the Kolmogorov--Smirnov (KS) distance $\sup |F_{\mathrm{fit}} - F_{\mathrm{true}}|$ on $[-45,45]$; and (iii) quantile estimates $\hat q_\beta$ read off the fitted CDF at the nearest grid point, with signed relative errors $(\hat q_\beta - q_\beta)/q_\beta$ against the exact Cauchy quantiles $q_{0.95} = 6.31$, $q_{0.99} = 31.82$.

\begin{table}[t]
  \centering
  \caption{Parity-matched PATP-MaxEnt (PM-PATP) converges on standard Cauchy data for $\alpha \le 0.3$ and reconstructs the central body, whereas the classical monomial member ($\alpha=1$) is infeasible; equal-budget T-MaxEnt also converges but has lower body accuracy. All rows use four constraints, $N=1000$, support $D=[-50,50]$, seed $20260612$. Conv.: solver convergence status; $\kappa_H$: coordinate-specific condition number of the ridge-regularized dual Hessian ($H+10^{-8}I$); body MSE: density MSE on $[-10,10]$; KS: Kolmogorov--Smirnov distance on $[-45,45]$; $\hat q_\beta$ (err): quantile estimate and signed relative error against the exact Cauchy quantile ($q_{0.95}=6.31$, $q_{0.99}=31.82$). The selection criterion \cref{eq:opt_alpha} attains its minimum over the converged PATP rows at $\alpha=0.3$ ($\Gamma=2.395$, bold).}
  \label{tab:cauchy}
  \setlength{\tabcolsep}{1.5pt}
  \resizebox{\linewidth}{!}{%
  \begin{tabular}{llcccccc}
    \toprule
    Basis & param & Conv. & $\kappa_H$ & body MSE & KS & $\hat q_{0.95}$ (err) & $\hat q_{0.99}$ (err) \\
    \midrule
    PM-PATP & $\alpha{=}0.0$ & yes & $7.45 \times 10^{3}$ & $2.46 \times 10^{-3}$ & $0.092$ & $16.8$ ($+166.6\%$) & $41.2$ ($+29.5\%$) \\
    PM-PATP & $\alpha{=}0.1$ & yes & $5.30 \times 10^{3}$ & $2.82 \times 10^{-3}$ & $0.098$ & $17.6$ ($+179.3\%$) & $41.8$ ($+31.3\%$) \\
    PM-PATP & $\alpha{=}0.2$ & yes & $3.30 \times 10^{3}$ & $3.84 \times 10^{-3}$ & $0.119$ & $21.2$ ($+236.3\%$) & $44.6$ ($+40.2\%$) \\
    PM-PATP & $\boldsymbol{\alpha{=}0.3}$ & yes & $2.46 \times 10^{3}$ & $5.58 \times 10^{-3}$ & $0.161$ & $32.6$ ($+416.2\%$) & $47.9$ ($+50.7\%$) \\
    PM-PATP & $\alpha{=}0.4$ & \multicolumn{6}{l}{solver failure (targets attainable on $D$; Newton overflow)} \\
    PM-PATP & $\alpha{=}0.5^{\dagger}$ & yes & $1.71 \times 10^{10}$ & $3.90 \times 10^{-3}$ & $0.261$ & $16.6$ ($+163.4\%$) & $27.6$ ($-13.1\%$) \\
    PM-PATP & $\alpha{=}0.6$ & \multicolumn{6}{l}{solver failure (targets attainable on $D$; Newton overflow)} \\
    PM-PATP & $\alpha{=}0.7$--$0.9$ & \multicolumn{6}{l}{infeasible (high-order targets exceed the attainable range on $D$)} \\
    Monomial & $\alpha{=}1.0$ & \multicolumn{6}{l}{infeasible: $\hat m_4 = 6.9{\times}10^{9} \gg \max_{x\in D} x^4 = 6.25{\times}10^{6}$} \\
    \midrule
    T-MaxEnt & $p{=}0.2, S{=}2$ & yes & $1.13 \times 10^{1}$ & $4.33 \times 10^{-3}$ & $0.316$ & $33.9$ ($+437.6\%$) & $43.6$ ($+37.2\%$) \\
    \bottomrule
  \end{tabular}%
  }

  \smallskip
  \raggedright\footnotesize $^{\dagger}$Degenerate: at $\alpha = 0.5$ all exponents equal $1$ and the design matrix has rank $2$ (\cref{rem:degenerate}); the row is shown for completeness and excluded from comparisons. $\Gamma$ for the T-MaxEnt row uses a different constraint functional and is not comparable to the PATP rows, so it is omitted.
\end{table}

\Cref{tab:cauchy} summarizes the sweep over $\alpha \in \{0, 0.1, \dots, 1.0\}$. Three findings follow.

\emph{Feasibility.} The solver converges for $\alpha \in \{0, 0.1, 0.2, 0.3\}$, where all fractional exponents satisfy $p_i(\alpha) < 1$ and the corresponding population moments of the Cauchy law are finite. At $\alpha = 1$ the monomial problem is \emph{infeasible} on the truncated support: the empirical fourth-moment target $\hat m_4 = 6.9 \times 10^{9}$ vastly exceeds the largest value of $x^4$ attainable on $[-50,50]$, namely $6.25 \times 10^{6}$ (the second moment likewise, $\hat m_2 = 3181 > 2500$), so no density on $D$ can match the target --- this, not generic ``divergence,'' is the precise failure mechanism of classical four-moment MaxEnt here. The same certificate fires for $\alpha \in \{0.7, 0.8, 0.9\}$, whose high-order signed targets (up to $2.1 \times 10^{7}$ at $\alpha = 0.9$) exceed the attainable range on $D$. The failures at $\alpha \in \{0.4, 0.6\}$ are by contrast purely numerical: there the targets are finite and attainable, but the Newton iteration overflows --- so convergence is governed jointly by feasibility on the truncated support and solver robustness, not by moment existence alone. This non-monotone pattern is stable: across $20$ seeds, the four-moment monomial target was infeasible in $19$ and $\alpha=0$ converged in $18$ (\cref{sec:repli}).

\emph{Body accuracy vs.\ tail accuracy.} The most accurate member is $\alpha = 0$: body MSE $2.46 \times 10^{-3}$ and KS distance $0.092$ (over $20$ seeds, $0.068 \pm 0.026$) --- a reconstruction of the central body, quantified by these two statistics, from four sample-surrogate constraints (\cref{fig:cauchy}). Tail accuracy is limited: the $99\%$ quantile is overestimated by $29.5\%$ and the $95\%$ quantile by $167\%$, because an exponential-family fit on a truncated support cannot reproduce the $\sim x^{-2}$ tail decay and redistributes the unmatched tail mass inward (the upper-quantile estimates saturate near the truncation boundary $L=50$). This is a structural limitation of few-constraint MaxEnt on heavy tails rather than a defect of the basis: no member of any basis family tested here achieves accurate Cauchy extreme quantiles on $[-50,50]$.

\emph{Selection criterion.} The dual-potential criterion \cref{eq:opt_alpha} picks $\alpha^{\ast} = 0.3$, the \emph{least} accurate of the four converged fits (body MSE $5.58 \times 10^{-3}$ vs.\ $2.46 \times 10^{-3}$ at $\alpha = 0$). It performs well on the multimodal example (\cref{sec:exp2}) but fails again in the Rajan panel (\cref{sec:headtohead}). The variance-targeted and held-out alternatives in the supplement are useful diagnostics, not validated general selectors.

For comparison, the equal-budget T-MaxEnt row uses four ECF constraints, which exist without moment conditions, and every tested $(p,S)$ configuration converges. Its body accuracy is lower (body MSE $4.33 \times 10^{-3}$, KS $0.316$). The reported $\kappa_H$ values are solver diagnostics in different coordinate systems and do not rank the represented families. The evidence here is limited to observed convergence and reconstruction error: PM-PATP is more accurate on the body, while the tested T-MaxEnt configurations avoid undefined population moments.

\begin{figure}[t]
  \centering
  \includegraphics[width=0.92\linewidth]{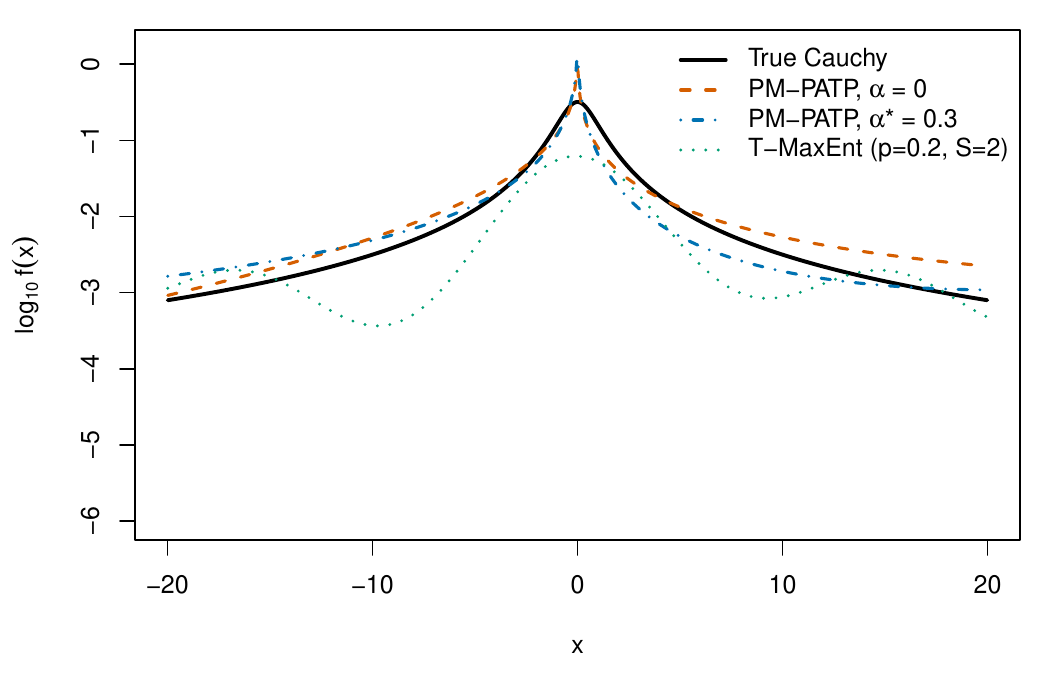}
  \caption{PM-PATP at $\alpha = 0$ tracks the central body of the standard Cauchy density over four decades, while the potential-selected $\alpha^{\ast} = 0.3$ concentrates more mass near the mode and the equal-budget T-MaxEnt ($p=0.2$, $S=2$) flattens the body; all fits underestimate the far tails on the truncated support. This visual ordering matches the body-MSE and KS columns of \cref{tab:cauchy}: $\alpha=0$ is closest to the true density on $[-10,10]$, and all reconstructions diverge from it beyond $|x|\approx 10$. Log-density vs.\ $x$ on $[-20, 20]$; four constraints each, $N = 1000$, seed $20260612$.}
  \label{fig:cauchy}
\end{figure}

\subsection{Experiment 2 (RQ2): multimodal target --- bimodal Gaussian mixture}
\label{sec:exp2}
We fit the bimodal Gaussian mixture
\begin{equation}
  f(x) = 0.4\, \mathcal{N}\!\left(x;\, -1,\, \sigma^2\right) + 0.6\, \mathcal{N}\!\left(x;\, 1,\, \sigma^2\right), \qquad \sigma^2 = 0.32 \ (\sigma \approx 0.566),
  \label{eq:mixture}
\end{equation}
from $N = 1000$ samples (seed $20260612$) using six constraints on the support $[-5, 5]$ (grid of $1000$ points). All constraint targets --- power and trigonometric alike --- are empirical averages over the same sample, so every basis competes on identical information: the same number of constraints, support, grid, solver, and stopping rule. Accuracy is the unweighted MSE between the fitted and true density on the grid (``PDF MSE'').

\begin{table}[t]
  \centering
  \caption{On the bimodal mixture, the PM-PATP member selected by the dual-potential criterion ($\alpha^{\ast} = 0.7$) improves PDF MSE $7.1$-fold over the six-moment monomial baseline on this seed (over $20$ seeds, the ratio is $8.5 \pm 5.8$, favorable in all $20$; \cref{sec:repli}). T-MaxEnt also improves the PDF MSE, while the all-odd Form-B basis illustrates the parity obstruction (\cref{prop:parity}). Six constraints, $N = 1000$, support $[-5,5]$, seed $20260612$. $\kappa_H$ is retained only as a coordinate-specific solver diagnostic; $\Gamma$ is comparable within the PATP family but not across constraint families.}
  \label{tab:mixture}
  \footnotesize\setlength{\tabcolsep}{3.5pt}
  \begin{tabular}{llcccc}
    \toprule
    Basis & Parameters & Conv. & $\kappa_H$ & $\Gamma$ & PDF MSE \\
    \midrule
    PM-PATP (selected) & $\alpha^{\ast} = 0.7$ & yes & $\mathbf{2.43 \times 10^{4}}$ & $\mathbf{1.4198}$ & $8.03 \times 10^{-5}$ \\
    PM-PATP (best MSE) & $\alpha = 0.6$ & yes & $4.00 \times 10^{5}$ & $1.4203$ & $\mathbf{6.32 \times 10^{-5}}$ \\
    PM-PATP & $\alpha = 0.8$ & yes & $7.23 \times 10^{3}$ & $1.4200$ & $9.10 \times 10^{-5}$ \\
    T-MaxEnt & $p = 0.5,\ S = 3$ & yes & $3.24 \times 10^{4}$ & $1.4239$ & $3.48 \times 10^{-4}$ \\
    Monomial ($=$ PM-PATP, $\alpha = 1$) & $x^i,\ i \le 6$ & yes & $1.21 \times 10^{5}$ & $1.4272$ & $5.73 \times 10^{-4}$ \\
    Legendre $P_1$--$P_6(x/5)$ & --- & yes & $1.36 \times 10^{5}$ & $1.4272$ & $5.73 \times 10^{-4}$ \\
    Form-B (all-odd) PATP & $\alpha = 0.9$ & yes & $2.07 \times 10^{10}$ & $2.2400$ & $1.39 \times 10^{-2}$ \\
    \bottomrule
  \end{tabular}

  \smallskip
  \raggedright\footnotesize The Legendre basis spans the same polynomial space as the monomials (the constant is absorbed by normalization), so its fitted density is mathematically identical. Differences in $\kappa_H$ would describe coordinates, not model quality; all listed values are for $H + 10^{-8}I$ in the displayed parameterization.
\end{table}

\Cref{tab:mixture} reports the comparison and \cref{fig:mixture} the fitted densities. The six-moment monomial baseline converges and has PDF MSE $5.73 \times 10^{-4}$. The one-dimensional PATP scan selects $\alpha^{\ast} = 0.7$, with PDF MSE $8.03 \times 10^{-5}$, a $7.1$-fold improvement on identical information and within $27\%$ of the best member in the sweep ($6.32 \times 10^{-5}$ at $\alpha = 0.6$). T-MaxEnt has PDF MSE $3.48 \times 10^{-4}$, a $1.6$-fold improvement over the monomial fit. Condition numbers remain in the table to diagnose these particular solves, but \cref{eq:hessian_congruence} precludes using them as evidence of family-level superiority.

The Form-B row quantifies the parity obstruction of \cref{prop:parity}: the all-odd basis has PDF MSE $1.39 \times 10^{-2}$, $40$ times the T-MaxEnt error and $170$ times the selected PM-PATP error. With only odd constraints, the fitted \emph{log-density} $\ln f = \sum_i \lambda_i \varphi_i$ is odd up to the additive constant $-\ln Z$, so it cannot match a target whose log-density has a non-trivial even part. The bimodal mixture is mildly asymmetric (weights $0.4/0.6$), but its log-density is dominated by an even double-well shape. Comparisons against this structurally restricted row should not be presented as comparisons against general polynomial MaxEnt.

\begin{figure}[t]
  \centering
  \includegraphics[width=0.92\linewidth]{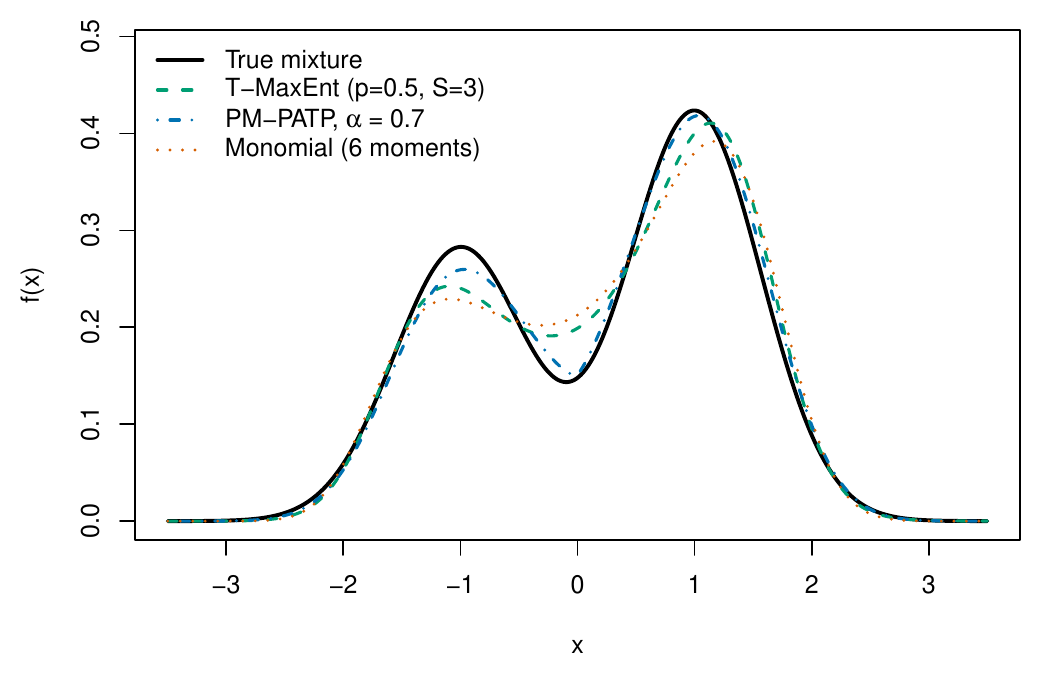}
  \caption{The scan-selected PM-PATP member ($\alpha^{\ast}=0.7$) follows both modes and the central valley of the bimodal mixture; T-MaxEnt ($p = 0.5$, $S = 3$) smooths the modes, and the six-moment monomial fit rounds them further. True density and fits on $[-3.5, 3.5]$; six constraints each, identical sample (seed $20260612$).}
  \label{fig:mixture}
\end{figure}

\paragraph{Frequency ablation and design rule.} The supplementary T-MaxEnt ablation shows that, for fixed $p \le 0.7$, accuracy improves as harmonics are added; the best grid value is PDF MSE $9.67 \times 10^{-5}$ at $(p,S)=(0.7,5)$. The condition numbers recorded in the fixed T-MaxEnt coordinates diagnose increasing numerical redundancy but are not a cross-family criterion. The screen $|\hat\psi_N(Sp)| \ge 0.05$ rejects $(1.0,4)$ and $(1.0,5)$, whose top-harmonic ECF moduli are $0.035$ and $0.009$, but it is not sufficient: the remaining $p=1.0$ configurations are still an order of magnitude less accurate than $p\le0.7$. A $3/\sqrt{N}$ admissibility screen followed by held-out log-score selection chooses $p\in\{0.5,0.7\}$ in all ten seeds, with mean PDF-MSE penalty $1.7\times$ the per-seed oracle versus $3.3\times$ for the fixed $(0.5,3)$ default. This is a within-family tuning result, not a validated selector among generating-element families.

\paragraph{Does the PATP map matter, or any one-parameter family?} Tying the exponents to one scalar replaces a multidimensional FM-MEM search by a fixed scan. A free five-exponent Nelder--Mead search requires 52--154 inner solves, depending on its start, and gives PDF MSE $9.5$--$10.5\times10^{-5}$; the 11-point PATP scan gives $8.0\times10^{-5}$. Over five seeds, a genetic-algorithm search uses about 209 inner solves, compared with 10 for the scan, and the same dual-potential selection gives mean PDF MSE $1.0\times10^{-4}$ versus $9.7\times10^{-5}$. When both searches use the unavailable truth MSE, the genetic algorithm gives $5.9\times10^{-5}$ and the PATP grid $7.6\times10^{-5}$. On this example, the additional search cost changes little under a realizable selector, although the oracle result shows some remaining family capacity. A linear path $q_i(\alpha)=(1-\alpha)/i+\alpha i$ gives selected PDF MSE $8.5\times10^{-5}$, indicating that the main computational reduction comes from the one-parameter structure rather than uniquely from the quadratic PATP map.

\subsection{Experiment 3 (RQ3): the matching principle across tail classes}
\label{sec:genelement}
We now vary the generating element against four targets of differing tail class: standard Cauchy (algebraic tail, index $-2$), Student-$t$ with two degrees of freedom ($-3$), symmetric $\alpha$-stable with $\alpha=1.5$ ($-2.5$; samples by the Chambers--Mallows--Stuck method, reference density by trapezoidal characteristic-function inversion), and standard Gaussian (light tail, control). All use $N=1000$ (seed $20260612$), the support $[-L,L]$ with $L=50$ (heavy) or $L=8$ (Gaussian), and the same solver. We compare five elements: the agnostic fractional element (power-PATP, $\alpha=0$, four constraints), the trigonometric element (T-MaxEnt, $p=0.2$, $S=2$), the matched logarithmic element with one scale (LogRat-1, one constraint) and four scales (LogRat-multi, $s\in\{0.5,1,2,4\}$), and a bounded-rational control ($1/(1+(x/s)^2)$, same scales). Fit quality is the KS distance on the support interior; tail fidelity is the slope of $\log_{10} f$ versus $\log_{10}|x|$ on the tail (the body-weighted KS does \emph{not} reveal tail behavior, so both are reported).

\begin{table}[t]
  \centering
  \caption{Tail-matched generating elements recover algebraic tails that agnostic and bounded controls distort. KS distance (lower better) by element and target, plus the recovered Cauchy tail slope (true $-2$). On algebraic-tailed targets the matched logarithmic element is best or near-best in KS \emph{and} is the only element that recovers the tail index; the bounded-rational control attains comparable KS but a spuriously flat tail (slope $-0.62$), so KS alone does not certify tail fidelity; on the light-tailed Gaussian the trigonometric and agnostic elements do well while the single-scale logarithmic element does not. Bold marks the recommended/matched element per regime, not the strict column minimum. KS is computed on $[-0.9L,0.9L]$ over a $4000$-point grid; small differences from \cref{tab:cauchy} (e.g.\ power-PATP Cauchy $0.087$ here vs.\ $0.092$ there, $2000$-point grid) are grid-resolution effects. $N=1000$, seed $20260612$.}
  \label{tab:genelement}
  \footnotesize
  \begin{tabular}{lccccc}
    \toprule
    Generating element & \multicolumn{4}{c}{KS distance} & Cauchy \\
    \cmidrule(lr){2-5}
    & Cauchy & $t(2)$ & S$\alpha$S $1.5$ & Gaussian & tail slope \\
    \midrule
    power-PATP $\alpha{=}0$ (agnostic) & $0.087$ & $0.040$ & $0.036$ & $0.042$ & $-1.14$ \\
    T-MaxEnt (trigonometric)          & $0.316$ & $0.332$ & $0.331$ & $\mathbf{0.013}$ & $+1.85$ \\
    LogRat-1 (matched, $1$ constr.)   & $0.015$ & $0.041$ & $0.039$ & $0.052$ & $\mathbf{-1.88}$ \\
    LogRat-multi (matched, $4$)       & $0.012$ & $\mathbf{0.028}$ & $\mathbf{0.0035}$ & $0.012$ & $-0.99^{\ddagger}$ \\
    BoundedRat (control, $4$)         & $\mathbf{0.012}$ & $0.027$ & $0.0052$ & $0.012$ & $-0.62^{\ddagger}$ \\
    \bottomrule
  \end{tabular}

  \smallskip
  \raggedright\footnotesize S$\alpha$S$(1.5)$: symmetric $\alpha$-stable, $\alpha=1.5$. True tail slopes: Cauchy $-2$, $t(2)$ $-3$, S$\alpha$S$(1.5)$ $-2.5$. $^{\ddagger}$Multi-constraint elements give an unreliable single tail-slope estimate (scale mixing); the clean estimate is LogRat-1's, whose fitted density has analytic slope $2\hat\lambda$ as $|x|\to\infty$ --- here $\hat\lambda=-0.943$ on Cauchy (cf.\ $-1$ for the true law), giving $-1.886$; the tabulated $-1.88$ is the slope measured on the finite window $[8,40]$. The single-constraint $\kappa_H=1$ is a $1{\times}1$ Hessian and is \emph{not} a conditioning claim; the multi-scale variants have $\kappa_H\sim10^4$, comparable to the other elements. The Gaussian-core hybrid $\{x^2,\log(1+x^2)\}$ is omitted: it is infeasible on Cauchy, inheriting the divergent $\hat m_2$ target that breaks monomial MaxEnt on infinite-variance laws.
\end{table}

\Cref{tab:genelement} and \cref{fig:genelement} support the matching principle, subject to three qualifications.

\emph{What matching can recover when the class is supplied.} On the three algebraic-tailed targets the truth-matched logarithmic element is best or near-best in KS (LogRat-multi: $0.012$, $0.028$, and $0.0035$ on Cauchy, $t(2)$, and stable, respectively --- essentially tied with the bounded-rational control). Its distinct representational property is the tail index: $\exp(\lambda\log(1+x^2))=(1+x^2)^\lambda$ has asymptotic tail slope exactly $2\lambda$, and the single-constraint fit returns $\hat\lambda=-0.943$ on standard Cauchy (measured slope $-1.88$ on $[8,40]$, within $\approx6\%$ of the true $-2$). The fractional and trigonometric fits return slopes $-1.14$ and $+1.85$, respectively. These are oracle-family comparisons because the logarithmic tail class and scale are supplied, not selected from the data.

\emph{KS hides tail failure; the bounded-rational control proves it.} The bounded-rational element attains KS as low as the logarithmic one (Cauchy $0.012$), yet its tail slope is a spuriously flat $-0.62$: $\exp(\lambda/(1+x^2))\to\mathrm{const}$ as $|x|\to\infty$, so it flattens rather than reproduces the tail. Body-weighted KS does not see this; the tail-slope diagnostic does. Hence a low KS is necessary but not sufficient evidence of a correct heavy-tail reconstruction.

\emph{A committed matched element is the wrong tool off its class.} On the light-tailed Gaussian the \emph{single-scale} logarithmic element (LogRat-1) is the worst performer (KS $0.052$, forcing an algebraic tail onto a Gaussian), while the trigonometric element excels (KS $0.013$) and the agnostic fractional element is adequate ($0.042$). The multi-scale logarithmic element is more flexible and stays competitive even here (KS $0.012$), but only because its several scales let it mimic a light core --- at which point it has forfeited the clean single-parameter tail-index interpretation that is its distinctive value. This is the matching principle's necessary converse: a \emph{committed} matched element is a two-sided assumption. In this four-target, single-seed panel, the agnostic fractional element has KS values from $0.036$ to $0.087$ and does not have the largest KS value for any target. This pattern is descriptive, not a guarantee of the construction or evidence for a universally preferred default when the tail class is unknown.

\emph{The matched element escapes truncation in closed form.} The quantile saturation that limits the truncated-grid fits (\cref{sec:limitations}) is a property of the bounded support, not of the matched element. On the whole line $\mathbb{R}$ the model $(1+(x/s)^2)^\lambda$ is the Student/Cauchy family with closed-form normalization $Z=s\,B(\tfrac12,-\lambda-\tfrac12)$ and closed-form moment $\psi(-\lambda)-\psi(-\lambda-\tfrac12)$ (\cref{sec:lograt}), so the single-constraint fit is an exact one-dimensional root-find and quantiles are read from the analytic Student-$t$ CDF. On standard Cauchy this returns $\hat\lambda=-0.989$ (analytic slope $2\hat\lambda=-1.98$, versus $-1.89$ from the $[-50,50]$ grid fit) and a KS over all of $\mathbb{R}$ of $0.0035$ (versus $0.0147$ truncated). It also removes quantile saturation: the truncated grid underestimates $q_{0.99}$ at $23.5$ (true $31.8$, $-26\%$), whereas the unbounded fit returns $34.5$ ($+8.5\%$). The residual error is the single-constraint, fixed-scale limitation --- it does not transfer cleanly to $t(2)$/$t(3)$, whose scale differs from $s=1$ --- rather than a truncation artifact.

\begin{figure}[t]
  \centering
  \includegraphics[width=0.92\linewidth]{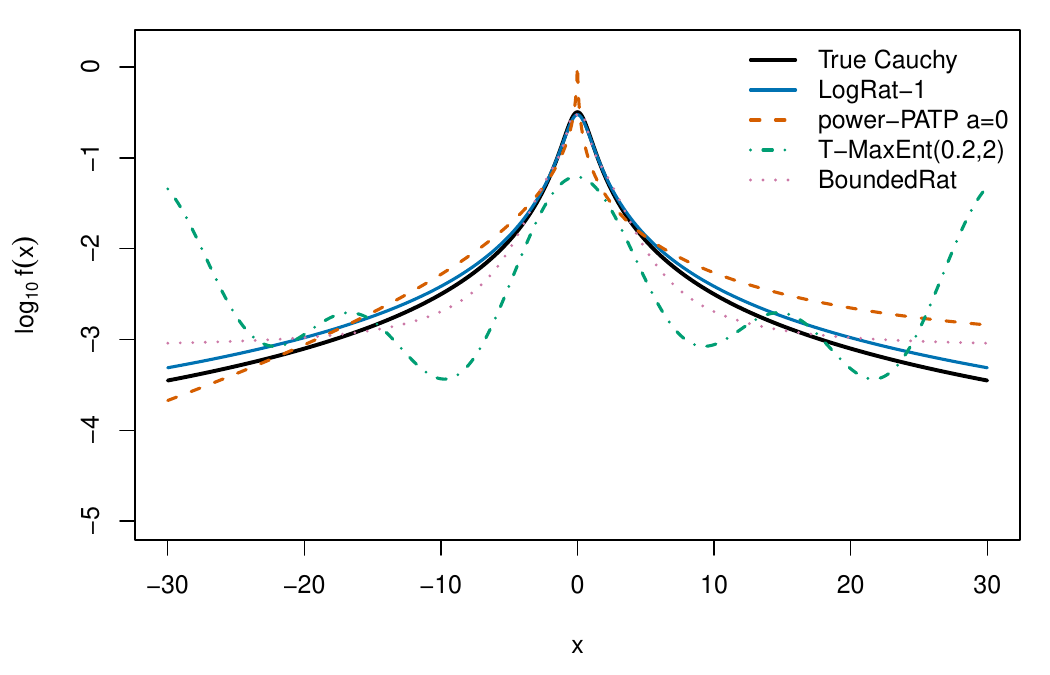}
  \caption{When the Cauchy tail class is supplied, LogRat-1 tracks the $x^{-2}$ tail over four decades. The fractional fit decays too slowly, the truncated trigonometric fit has no algebraic-tail form, and the bounded-rational control flattens. Log-density vs.\ $x$ on $[-30,30]$, $N=1000$, seed $20260612$; this is an oracle-family illustration, not a data-driven selection result.}
  \label{fig:genelement}
\end{figure}

\subsection{Experiment 4 (RQ4): Rajan benchmark audit and domain extension}
\label{sec:headtohead}
We first reconstruct the six Appendix-B distributions of \citet{rajan2018moment} from their analytical population moments. In the labelling used below, D1 is the three-component beta mixture of their Eq.~(15); D2--D5 are two-component normal mixtures (their Eqs.~(16)--(19)) with component means $\pm4$, $\pm1$, $\{0,3\}$ and $\{0,25\}$ and standard deviations $\{4,3\}$, $0.4$, $1$ and $1$; and D6 is the symmetric mixture of two shifted beta densities on $[-1,1]$ of their Eq.~(20). For each distribution and budget $n\in\{4,8,12\}$, the support is obtained from the two roots of the Christoffel mass in Eq.~(13) of that paper, with $\tau=10^{-(\lfloor n/2\rfloor+1)}$. Truth quantiles are analytical, and the evaluation uses all 30 printed levels, from $0.001\%$ through $99.999\%$. At level $p_j$ we compute
\begin{equation}
  e_j=100\,\frac{|\hat q(p_j)-q(p_j)|}{|q(p_j)-m_1|}.
  \label{eq:rajan_error}
\end{equation}
For each distribution, \citet{rajan2016benchmark} define $\varepsilon=\max(\bar e_{\mathrm{lower}},\bar e_{\mathrm{upper}})$. The table mean is the mean of the six $\varepsilon$ values; its reported SD is reconstructed as the SD of all $6\times30$ pointwise errors. We retain alternative aggregations in the supplement because the wording in the 2018 paper is not explicit.

GOPoly uses the monomial constraint span, the printed support rule, Gauss--Legendre quadrature, and Gram--Schmidt orthogonalization both initially and at each Newton iteration. T-MaxEnt and PATP use the same support, quadrature, and solver. The T-MaxEnt arm fixes the fundamental frequency at $2\pi$ divided by the support width. PATP scans $\alpha=0,0.1,\ldots,1$; the rank-deficient $\alpha=0.5$ member is excluded rather than regularized. Away from this midpoint, the PATP columns are evaluated as parity-block divided differences. This invertible triangular coordinate change preserves the constraint span, while avoiding cancellation among fractional-power columns; convergence residuals are also replayed in the original PATP coordinates. This is an independent implementation of the printed algorithm. The original MATLAB download is no longer available, so an unresolved difference from the authors' code cannot be ruled out.

\paragraph{Reproduction audit.} The reconstructed aggregation reproduces the Pearson row: $11.487\%$ mean and $10.760\%$ SD, compared with $11.46\%$ and $10.77\%$ in Table~2. It does not reproduce any GOPoly row (\cref{tab:rajan-reproduction}). The discrepancy is concentrated in D1, the three-component beta mixture: its consolidated GOPoly error is $70.743\%$, $22.229\%$, and $10.515\%$ for $n=4,8,12$. At $n=12$, the other five distributions range from $0.0036\%$ to $0.681\%$. Increasing the quadrature order from 240 to 960 leaves the D1 $n=4$ result unchanged at the shown precision. Replacing the estimated support by the analytical support changes the result but neither recovers the published row nor follows the printed support policy. The proper conclusion is therefore non-reproduction under the literal protocol, not a corrected value for the original study.

\begin{table}[t]
  \centering
  \caption{Audit of Table~2 in \citet{rajan2018moment}. Entries are mean/SD (\%). Pearson is reproduced; the three GOPoly rows are not reproduced by the literal printed protocol.}
  \label{tab:rajan-reproduction}
  \small
  \begin{tabular}{lccc}
    \toprule
    Method & $n$ & Published & This audit \\
    \midrule
    Pearson & 4  & $11.46/10.77$ & $11.487/10.760$ \\
    GOPoly  & 4  & $3.55/4.20$   & $14.771/23.811$ \\
    GOPoly  & 8  & $1.00/1.73$   & $4.706/8.168$ \\
    GOPoly  & 12 & $0.34/0.58$   & $2.031/4.178$ \\
    \bottomrule
  \end{tabular}
\end{table}

\paragraph{Equal-support comparison.} \Cref{tab:rajan-equal-support} compares the three generating-element families under the same operational support and constraint budget. All entries contain six converged targets. PATP-oracle is the truth-informed lower envelope over the $\alpha$ grid and is included only to measure family capacity. The selected arm minimizes the converged dual potential. The comparison is mixed: the oracle envelope is lower at $n=8$ and $n=12$, whereas the selected arm retains substantial distribution-specific gaps. At $n=12$, selected/oracle errors are $0.885/0.0288\%$ on D2, $0.905/0.102\%$ on D4, and $0.130/0.0391\%$ on D6. On D5 the rule selects $\alpha=0.7$ with error $0.000428\%$, close to the $\alpha=0.4$ oracle value $0.000350\%$. No deployable ranking follows from these data.

\begin{table}[t]
  \centering
  \caption{Mean of the six consolidated errors $\varepsilon$ (\%) under identical Rajan supports and budgets. PATP-selected uses the dual-potential heuristic; PATP-oracle uses true quantile error and is not operational. Every cell has 6/6 converged targets.}
  \label{tab:rajan-equal-support}
  \small
  \begin{tabular}{rcccc}
    \toprule
    $n$ & GOPoly & T-MaxEnt & PATP-selected & PATP-oracle \\
    \midrule
    4  & $14.771$ & $15.108$ & $11.518$ & $10.923$ \\
    8  & $4.706$  & $5.115$  & $2.683$  & $2.621$ \\
    12 & $2.031$  & $2.370$  & $1.421$  & $1.129$ \\
    \bottomrule
  \end{tabular}
\end{table}

\paragraph{Continuous-density replay.} A raw-power ablation for D5 at $n=12$ and $\alpha=0$ passed the discrete Newton residual but produced an unresolved spike at zero and a spurious $99.996\%$ error. Direct divided-difference evaluation gave $0.001529\%$ for the same PATP span. We checked this trigger independently by adaptive integration, rather than by the solver quadrature or the uniform quantile grid: the maximum original-coordinate constraint residual was $2.69\times10^{-9}$, the adaptive error was $0.001534\%$, and the largest adaptive-versus-grid quantile difference was $4.20\times10^{-7}$. The raw ablation is therefore treated as numerical false convergence, not as selector evidence.

\paragraph{Heavy-tail domain extension.} The synthetic heavy-tail panel is separate from the Rajan reproduction. Cauchy, Student-$t(2)$, and S$\alpha$S$(1.5)$ have no finite variance; Student-$t(3)$ has no fourth moment. Consequently, the population constraints required by four-moment Pearson and GOPoly are undefined. Their entries in \cref{tab:hh-heavy} are finite-sample surrogates built from empirical moments, not in-domain evaluations of those methods. LogRat is deliberately supplied with the algebraic-tail family and four scales, so its lower errors are matched-family oracle evidence. They show that population-defined logarithmic constraints extend the usable domain; they do not show that an automatic procedure would choose LogRat.

\paragraph{Real measured data.} The bottom row of \cref{tab:hh-heavy} is a descriptive comparison on the daily log-returns of DAX, SMI, CAC, and FTSE in \texttt{EuStockMarkets} ($n=1859$ each). Mean tail-quantile errors are $3.1\%$ for the manually matched LogRat arm, $3.0\%$ for Pearson, $7.2\%$ for the raw monomial surrogate, and $9.4\%$ for PATP. Without resampling intervals or a prevalidated family selector, the $3.1\%$ versus $3.0\%$ difference has no ranking interpretation.

\begin{table}[t]
  \centering
  \caption{Tail-quantile error (\%) at the $90/95/99\%$ levels, mean over ten seeds (synthetic) or over the evaluated levels (real data). $^\dagger$LogRat is matched in advance to an algebraic-tail family. $^\ast$Pearson and raw monomial entries on the synthetic panel are empirical-moment surrogates outside their population-moment domain; the monomial column is not labeled GOPoly.}
  \label{tab:hh-heavy}
  \footnotesize\setlength{\tabcolsep}{3pt}
  \begin{tabular}{lcccc}
    \toprule
    Target & LogRat$^\dagger$ & PATP & Pearson$^\ast$ & raw mono.$^\ast$ \\
    \midrule
    \multicolumn{5}{l}{\emph{Heavy-tailed synthetic targets ($N=1000$, ten seeds)}}\\
    Cauchy                 & $16.7$ & $28.4$ & $161.7$ & $60.1$ \\
    Student-$t(3)$         & $3.4$  & $7.4$  & $15.0$  & $19.1$ \\
    Student-$t(2)$         & $5.7$  & $11.4$ & $56.8$  & $36.4$ \\
    S$\alpha$S $(\alpha{=}1.5)$ & $7.1$ & $12.7$ & $74.9$ & $37.4$ \\
    \midrule
    \multicolumn{5}{l}{\emph{Real data (mean of four index series)}}\\
    EuStockMarkets returns & $3.1$ & $9.4$ & $3.0$ & $7.2$ \\
    \bottomrule
  \end{tabular}
\end{table}

\paragraph{Selection boundary.} Neither selector tested in the supplementary study closes the oracle-to-deployment gap. The entropy rule has the D2, D4, and D6 gaps reported above. The held-out family selector routes the close bimodal target to LogRat in all eight replications and splits its decisions on Cauchy. Matched LogRat and PATP-oracle are therefore reported as family-capacity results. A reliable data-driven choice among generating-element families remains open.

\subsection{Replication over seeds (Experiments 1--2)}
\label{sec:repli}
Because Experiments 1 and 2 each use a single data realization, we repeated both over $20$ independent seeds ($s = 1, \dots, 20$). On the Cauchy benchmark the comparison with the monomial member depends on how its moment targets are formed, and we report both conventions. If the targets use the \emph{full} sample while the density is fitted on $[-50,50]$, the four-moment monomial target is unattainable ($\hat m_4 > \max_{x \in D} x^4$) in $19$ seeds. If targets use only retained draws, the certificate cannot fire because $x^4 \le L^4$ for each retained draw; this discards $1.2\%$ of observations on average, and the monomial fit converges in all $20$ seeds. Under this more favorable convention, monomial KS is $0.194 \pm 0.011$, compared with $0.036 \pm 0.009$ for PATP at $\alpha=0$. Under the full-sample convention, PATP at $\alpha=0$ converges in $18$ seeds, with KS $0.068 \pm 0.026$ and body MSE $(2.04 \pm 0.87) \times 10^{-3}$. On the mixture benchmark, the dual-potential criterion selects $\alpha^{\ast}\in\{0.6,0.7,0.8\}$ in $19$ seeds (median $0.7$); its PDF-MSE improvement over the six-moment monomial baseline is $8.5 \pm 5.8$-fold (range $1.1$--$21.7$), favorable in all $20$. These replications support the stated body-error comparisons, but not a general selection claim: substantial selected-to-oracle gaps remain elsewhere (\cref{sec:headtohead}). Experiments 1--2 have 20-seed summaries (mean $\pm$ SD), the supplementary tuning studies use 10--20 seeds, the FM-MEM control uses five, and Experiment 3 is a single-seed analytic illustration (\cref{sec:limitations}).

\section{Discussion and Design Guidelines}
\label{sec:discussion}

The experiments distill into a \textbf{design map} for the generating element, summarized in \cref{tab:designmap}: the element must be matched to what is known about the target. The map rests on \emph{analytic} mechanisms --- the parity obstruction (\cref{prop:parity}), the exact tail-slope identity ($\text{slope}=2\lambda$), and the bounded-rational flattening --- illustrated by the single-seed numerical examples of \cref{tab:genelement}. The 20-seed results of \cref{sec:repli} cover only the Cauchy and mixture studies. They are consistent with the corresponding entries of the map but do not independently validate its other regimes. We therefore present the map as a set of analytic demonstrations with illustrative numerics, not as a multi-seed empirical ranking. Each entry below states a claim and its supported scope.

\begin{table}[t]
  \centering
  \caption{Design map: which generating element for which target. ``Agnostic'' elements assume no tail class; the ``matched'' logarithmic element assumes an algebraic tail and pays for the assumption when it is wrong.}
  \label{tab:designmap}
  \small
  \begin{tabular}{>{\raggedright\arraybackslash}p{0.26\linewidth}>{\raggedright\arraybackslash}p{0.26\linewidth}>{\raggedright\arraybackslash}p{0.38\linewidth}}
    \toprule
    Target regime & Candidate element & Supported rationale \\
    \midrule
    Unknown / heavy body & fractional-power (PATP), agnostic & some sub-unit moments exist where higher integer moments do not; contains monomials at $\alpha{=}1$ \\
    Bounded, multimodal, light tail & trigonometric (CF) & bounded constraints exist for any law; frequency still requires tuning \\
    Known algebraic (power-law) tail & logarithmic--rational, matched & directly represents and identifies a power-law exponent \\
    Light-tailed, unimodal, compact & classical monomial (low order) & adequate baseline in the compact cases studied here \\
    \bottomrule
  \end{tabular}
\end{table}

\begin{enumerate}
  \item \textbf{Feasibility on heavy tails (fractional element).} On the Cauchy benchmark, fractional parity-matched constraints with $p_i(\alpha) < 1$ restore solver feasibility where the four-moment monomial problem is infeasible on the truncated support (\cref{tab:cauchy}). The supported claim is feasibility plus body reconstruction (KS $0.092$); extreme-quantile accuracy is \emph{not} achieved by any agnostic element at this constraint budget and support, and we do not claim it.
  \item \textbf{Multimodal reconstruction (PM-PATP and T-MaxEnt).} Against a monomial baseline on identical information, selected PM-PATP improves PDF MSE $7.1$-fold on the displayed seed and $8.5 \pm 5.8$-fold over $20$ seeds; T-MaxEnt improves it $1.6$-fold on the displayed seed. The $\kappa_H$ values attached to these fits are coordinate-specific numerical diagnostics, not a second performance axis.
  \item \textbf{Choosing between the families.} PATP contains the monomials at $\alpha=1$, so its truth-oracle envelope cannot be worse than that member. This statement concerns family capacity only. The dual-potential rule is suboptimal on Cauchy and retains selected-to-oracle gaps on Rajan D2, D4, and D6 at $n=12$; T-MaxEnt also requires within-family frequency selection. The experiments do not validate an automatic choice between PATP, T-MaxEnt, and LogRat.
  \item \textbf{The matched logarithmic element, and the cost of matching (\cref{sec:genelement}).} If an algebraic tail and scale are supplied, $\log(1+(x/s)^2)$ gives $(1+(x/s)^2)^\lambda$ and identifies tail slope $2\lambda$. On standard Cauchy, the single-constraint fit gives a measured tail slope $-1.88$ against the true $-2$. This is an oracle-family result. On Gaussian data the same single-scale choice is the worst tested arm, and a bounded-rational control attains a similar KS while producing a false tail slope. Matching is therefore an assumption whose value must be evaluated with tail-sensitive diagnostics.
  \item \textbf{Frequency selection for T-MaxEnt.} The highest harmonic should remain above the ECF noise floor, but this is only a screen: $p=1.0$ can pass while producing an error an order of magnitude above $p\le0.7$. Adding harmonics can also increase numerical redundancy in the stated coordinates. Held-out tuning is useful within T-MaxEnt but does not select among element families.
  \item \textbf{The $\alpha$-selection heuristic.} Minimizing the converged dual potential selects a near-optimal member on the bimodal example, but it is suboptimal on Cauchy and differs materially from the oracle on several Rajan targets. The variance-targeted and held-out alternatives each answer narrower questions and also fail as general family selectors. Until a prespecified rule is validated across regimes, report the full scan and distinguish selected from oracle results.
\end{enumerate}

\section{Limitations}
\label{sec:limitations}

\textbf{Scope.} All methods and experiments are univariate; multivariate extension is future work.

\textbf{Truncated-support dependence.} For signed bases the partition function is finite only on a bounded support (\cref{sec:dual}), so the truncation length $L$ is part of the model; all Cauchy-experiment conclusions for those bases are conditional on $L = 50$, and quantile estimates are additionally limited by grid resolution. The matched logarithmic element is normalizable on $\mathbb{R}$ and admits a closed-form, truncation-free fit (\cref{sec:lograt,sec:genelement}). The supplementary sensitivity study shows that body metrics and the selected $\alpha^{\ast}$ change less than extreme quantiles as $L$ varies; it does not remove support dependence from the other families.

\textbf{Tail accuracy.} Few-constraint MaxEnt on a truncated support does not recover power-law tails: the Cauchy $95\%$ and $99\%$ quantiles are overestimated by between $+29.5\%$ and $+416\%$ across the converged non-degenerate members (the $\hat q_{0.95}$, $\hat q_{0.99}$ columns of \cref{tab:cauchy}; the degenerate $\alpha=0.5$ row, excluded from comparisons, is the sole exception). Applications requiring accurate extreme quantiles of heavy-tailed measurands need tail-dedicated methods, not this framework as-is.

\textbf{Sample-surrogate constraints.} For the Cauchy law some population constraints (notably $\mathbb{E}[X]$) do not exist; the fitted densities are sample-conditional surrogates. We quantify their variability across $20$ seeds in \cref{sec:repli} (e.g.\ Cauchy body KS $0.068 \pm 0.026$), but a full sampling-distribution characterization (e.g.\ bootstrap intervals on every reported statistic) is left to future work.

\textbf{Selection criterion.} The dual-potential rule \cref{eq:opt_alpha} has no optimality theory. It is suboptimal on Cauchy and leaves selected-to-oracle gaps on Rajan D2, D4, and D6. The supplementary oPMM$\alpha$ criterion is tied to a prespecified target functional, while held-out log score targets predictive shape; neither provides a reliable choice among element families. In the selector check, the close bimodal target is routed to LogRat in all eight replications and Cauchy decisions split. Automatic family selection is unresolved.

\textbf{Single-seed generating-element comparison.} Experiment 3 (\cref{sec:genelement}) uses one realization per target, unlike the $20$-seed Experiments 1--2. Its conclusions are nonetheless structural rather than sample-driven: the tail-index identity (slope $=2\lambda$), the bounded-rational flat-tail mechanism, and the parity obstruction are analytic, and the KS-versus-tail-slope discrepancy is a property of the metrics, not of the draw. A full multi-seed replication of the design map is left to future work.

\textbf{Solver.} \Cref{alg:newton} carries no global convergence guarantee. The D5 raw-power ablation also shows that a small discrete gradient is not a continuous-density certificate when the coordinate map creates a feature narrower than the quadrature spacing. Direct divided differences, replay in the original PATP coordinates, and the independent adaptive check resolve the observed trigger; they do not constitute a global guarantee for every basis and target.

\textbf{Benchmarks.} The Rajan audit implements the printed Christoffel support estimator and dynamic Gram--Schmidt GOPoly solver, rather than using a raw monomial proxy. It reproduces Pearson but not the published GOPoly rows. Because the original MATLAB download is unavailable, we cannot determine whether the remaining D1 discrepancy reflects an undocumented implementation choice, an aggregation difference, or an error in the printed material. The heavy-tail panel is outside GOPoly's population-moment domain and therefore labels its polynomial comparator as a raw empirical-moment surrogate. The FM-MEM controls isolate exponent selection: on the bimodal problem a free Nelder--Mead search uses $52$--$154$ inner solves and a genetic-algorithm search about $209$, against $10$ for the scan, but we do not reimplement their dimensional-reduction moment pipelines \citep{zhang2013entropy,xu2019novel}. These boundaries preclude a general state-of-the-art superiority claim.

\textbf{Generating-element catalog.} The three elements studied are instances, not an exhaustive set: a target with an exponential or sub-exponential (e.g.\ Laplace, Weibull) tail would call for yet another element, and the matched logarithmic element assumes a fixed scale $s$ that must itself be chosen (we stack a few scales, which works but muddies the single-constraint tail-index estimate). A principled, data-driven element-selection rule --- rather than the tail-class reasoning we apply by hand --- is open.

\section{Conclusion}
\label{sec:conclusion}

We formulate moment-constrained MaxEnt as a choice of generating element, which determines the sufficient statistics, population existence requirements, and representable log-density. Hessian condition numbers do not share this invariance: they depend on parameterization and are retained only as solver diagnostics. The parity theorem rules out odd-only elements for non-uniform symmetric targets. PATP supplies a reproducible one-parameter path containing the monomials, trigonometric constraints remain population-defined without moment assumptions, and a supplied logarithmic element represents algebraic tails with slope $2\lambda$.

The numerical evidence is deliberately narrower than the original framing. PATP improves body reconstruction on the Cauchy and bimodal examples, but its entropy selector is unreliable. The Rajan audit reproduces Pearson and fails to reproduce the published GOPoly rows; under identical supports the GOPoly, T-MaxEnt, and selected-PATP results are mixed, while PATP-oracle is non-operational. The heavy-tail panel demonstrates a domain extension because logarithmic constraints remain population-defined when the required integer moments do not. Its LogRat results are matched-family oracle evidence, not evidence for automatic deployment.

Future work has three immediate gates: resolve the D1 GOPoly discrepancy if the original implementation becomes available; validate a prespecified family selector on distributions where the choice is genuinely ambiguous; and extend the formulation to multivariate constraints. Broader comparisons with the 124-distribution suite of \citet{rajan2016benchmark} and complete FM-MEM pipelines are warranted only after those gates are met.

\section*{Ethical Considerations}
This is a methodological contribution with no human-subjects, personal, or sensitive data. The data are synthetic or standard simulated benchmarks, together with the public aggregate \texttt{EuStockMarkets} series distributed with base R. The main misuse risk is over-trust in reconstructed tails: few-constraint MaxEnt on truncated supports does not reproduce heavy-tail extreme quantiles without a tail-matched model and separate validation.

\section*{Data and Code Availability}
The R scripts that generate the tables and figures are available at \url{https://github.com/SZabolotnii/Ku-MaxEnt-code-supplement}. The Rajan audit includes analytical D1--D6 truth functions, all 30 probability levels, the printed support estimator, dynamic GOPoly reorthogonalization, equal-support T-MaxEnt and PATP arms, stable divided-difference PATP coordinates, original-coordinate residual replay, per-level errors, alternative aggregations, and numerical tests. The repository also contains the independent adaptive validation of the D5 numerical trigger and the replication, heavy-tail, real-data, selector, and exponent-path scripts. Core experiments use base R~4.5.3; the Pearson and selector panels use the packages stated in \cref{sec:results}.

\section*{CRediT Authorship Contribution Statement}
\textbf{Serhii Zabolotnii:} Conceptualization, Methodology, Software, Formal analysis, Investigation, Writing --- original draft, Writing --- review \& editing.

\section*{Declaration of Competing Interests}
The author declares no competing financial or personal interests.

\section*{Funding}
This research received no specific grant from any funding agency in the public, commercial, or not-for-profit sectors.

\section*{Declaration of Generative AI and AI-Assisted Technologies in the Manuscript Preparation Process}
During preparation of this work, the author used Anthropic Claude (including Claude Code) and OpenAI Codex to improve language and readability and to assist in developing, checking, and documenting the reproducibility code. The author reviewed and edited the resulting material and takes full responsibility for the content.

\bibliographystyle{plainnat}
\bibliography{references}

@article{hartigan1985dip,
  author  = {Hartigan, J. A. and Hartigan, P. M.},
  title   = {The Dip Test of Unimodality},
  journal = {The Annals of Statistics},
  volume  = {13},
  number  = {1},
  pages   = {70--84},
  year    = {1985},
  doi     = {10.1214/aos/1176346577}
}

@Manual{pearsonDS,
  author  = {Becker, Martin and Kl{\"o}{\ss}ner, Stefan},
  title   = {{PearsonDS}: Pearson Distribution System},
  note    = {R package},
  url     = {https://CRAN.R-project.org/package=PearsonDS},
  year    = {2025}
}

@Manual{rcore,
  author       = {{R Core Team}},
  title        = {R: A Language and Environment for Statistical Computing},
  organization = {R Foundation for Statistical Computing},
  address      = {Vienna, Austria},
  year         = {2026},
  note         = {The \texttt{EuStockMarkets} series is distributed with the base \texttt{datasets} package},
  url          = {https://www.R-project.org/}
}

@article{jaynes1957information,
  author  = {Jaynes, Edwin T.},
  title   = {Information theory and statistical mechanics},
  journal = {Physical Review},
  volume  = {106},
  number  = {4},
  pages   = {620--630},
  year    = {1957},
  doi     = {10.1103/PhysRev.106.620}
}

@article{mead1984maximum,
  author  = {Mead, Lawrence R. and Papanicolaou, N.},
  title   = {Maximum entropy in the problem of moments},
  journal = {Journal of Mathematical Physics},
  volume  = {25},
  number  = {8},
  pages   = {2404--2417},
  year    = {1984},
  doi     = {10.1063/1.526446}
}

@article{borwein1991convergence,
  author  = {Borwein, Jonathan M. and Lewis, Adrian S.},
  title   = {Convergence of best entropy estimates},
  journal = {SIAM Journal on Optimization},
  volume  = {1},
  number  = {2},
  pages   = {191--205},
  year    = {1991},
  doi     = {10.1137/0801014}
}

@article{bandyopadhyay2005stable,
  author  = {Bandyopadhyay, Kausik and Bhattacharya, Arun K. and Biswas, Parthapratim and Drabold, David A.},
  title   = {Maximum entropy and the problem of moments: A stable algorithm},
  journal = {Physical Review E},
  volume  = {71},
  number  = {5},
  pages   = {057701},
  year    = {2005},
  doi     = {10.1103/PhysRevE.71.057701}
}

@article{abramov2007improved,
  author  = {Abramov, Rafail V.},
  title   = {An improved algorithm for the multidimensional moment-constrained maximum entropy problem},
  journal = {Journal of Computational Physics},
  volume  = {226},
  number  = {1},
  pages   = {621--644},
  year    = {2007},
  doi     = {10.1016/j.jcp.2007.04.026}
}

@article{noviinverardi2003fractional,
  author  = {Novi Inverardi, Pier Luigi and Tagliani, Aldo},
  title   = {Maximum entropy density estimation from fractional moments},
  journal = {Communications in Statistics --- Theory and Methods},
  volume  = {32},
  number  = {2},
  pages   = {327--345},
  year    = {2003},
  doi     = {10.1081/STA-120018189}
}

@article{gzyl2010hausdorff,
  author  = {Gzyl, Henryk and Tagliani, Aldo},
  title   = {Hausdorff moment problem and fractional moments},
  journal = {Applied Mathematics and Computation},
  volume  = {216},
  number  = {11},
  pages   = {3319--3328},
  year    = {2010},
  doi     = {10.1016/j.amc.2010.04.059}
}

@article{cottone2009fractional,
  author  = {Cottone, Giulio and Di Paola, Mario},
  title   = {On the use of fractional calculus for the probabilistic characterization of random variables},
  journal = {Probabilistic Engineering Mechanics},
  volume  = {24},
  number  = {3},
  pages   = {321--330},
  year    = {2009},
  doi     = {10.1016/j.probengmech.2008.08.002}
}

@article{li2011combined,
  author  = {Li, Gang and Zhang, Kai},
  title   = {A combined reliability analysis approach with dimension reduction method and maximum entropy method},
  journal = {Structural and Multidisciplinary Optimization},
  volume  = {43},
  number  = {1},
  pages   = {121--134},
  year    = {2011},
  doi     = {10.1007/s00158-010-0546-2}
}

@article{zhang2013entropy,
  author  = {Zhang, Xufang and Pandey, Mahesh D.},
  title   = {Structural reliability analysis based on the concepts of entropy, fractional moment and dimensional reduction method},
  journal = {Structural Safety},
  volume  = {43},
  pages   = {28--40},
  year    = {2013},
  doi     = {10.1016/j.strusafe.2013.03.001}
}

@article{li2019improved,
  author  = {Li, Gang and He, Wanxin and Zeng, Yan},
  title   = {An improved maximum entropy method via fractional moments with {L}aplace transform for reliability analysis},
  journal = {Structural and Multidisciplinary Optimization},
  volume  = {59},
  number  = {4},
  pages   = {1301--1320},
  year    = {2019},
  doi     = {10.1007/s00158-018-2129-6}
}

@article{xu2019novel,
  author  = {Xu, Jun and Dang, Chao},
  title   = {A novel fractional moments-based maximum entropy method for high-dimensional reliability analysis},
  journal = {Applied Mathematical Modelling},
  volume  = {75},
  pages   = {749--768},
  year    = {2019},
  doi     = {10.1016/j.apm.2019.06.037}
}

@article{zhang2020fractional,
  author  = {Zhang, Xiaodong and Low, Ying Min and Koh, Chan Ghee},
  title   = {Maximum entropy distribution with fractional moments for reliability analysis},
  journal = {Structural Safety},
  volume  = {83},
  pages   = {101904},
  year    = {2020},
  doi     = {10.1016/j.strusafe.2019.101904}
}

@article{li2022multimodal,
  author  = {Li, Gang and Wang, Yixuan and Zeng, Yan and He, Wanxin},
  title   = {A new maximum entropy method for estimation of multimodal probability density function},
  journal = {Applied Mathematical Modelling},
  volume  = {102},
  pages   = {137--152},
  year    = {2022},
  doi     = {10.1016/j.apm.2021.09.029}
}

@article{li2024improved,
  author  = {Li, Gang and Wang, Yixuan and Zeng, Yan and He, Wanxin},
  title   = {An improved fractional moment maximum entropy method with polynomial fitting},
  journal = {Journal of Mechanical Design},
  volume  = {146},
  number  = {6},
  pages   = {061704},
  year    = {2024},
  doi     = {10.1115/1.4064247}
}

@article{wang2025iterative,
  author  = {Wang, Lei and Wang, Tao and Dong, You and Frangopol, Dan M. and Li, Zhengliang},
  title   = {Structural reliability analysis based on fractional moments-based iterative maximum entropy method and multiplicative exact dimension reduction integration method},
  journal = {Reliability Engineering \& System Safety},
  volume  = {264},
  pages   = {111344},
  year    = {2025},
  doi     = {10.1016/j.ress.2025.111344}
}

@phdthesis{burg1975mesa,
  author = {Burg, John Parker},
  title  = {Maximum Entropy Spectral Analysis},
  school = {Department of Geophysics, Stanford University},
  year   = {1975}
}

@article{lang1982mem,
  author  = {Lang, Stephen W. and McClellan, James H.},
  title   = {Multidimensional {MEM} spectral estimation},
  journal = {IEEE Transactions on Acoustics, Speech, and Signal Processing},
  volume  = {30},
  number  = {6},
  pages   = {880--887},
  year    = {1982},
  doi     = {10.1109/TASSP.1982.1163967}
}

@article{gull1978image,
  author  = {Gull, S. F. and Daniell, G. J.},
  title   = {Image reconstruction from incomplete and noisy data},
  journal = {Nature},
  volume  = {272},
  pages   = {686--690},
  year    = {1978},
  doi     = {10.1038/272686a0}
}

@article{barron1991exponential,
  author  = {Barron, Andrew R. and Sheu, Chyong-Hwa},
  title   = {Approximation of density functions by sequences of exponential families},
  journal = {The Annals of Statistics},
  volume  = {19},
  number  = {3},
  pages   = {1347--1369},
  year    = {1991},
  doi     = {10.1214/aos/1176348252}
}

@article{gatto2007gvm,
  author  = {Gatto, Riccardo and Jammalamadaka, Sreenivasa Rao},
  title   = {The generalized von {M}ises distribution},
  journal = {Statistical Methodology},
  volume  = {4},
  number  = {3},
  pages   = {341--353},
  year    = {2007},
  doi     = {10.1016/j.stamet.2006.11.003}
}

@article{feuerverger1977ecf,
  author  = {Feuerverger, Andrey and Mureika, Roman A.},
  title   = {The empirical characteristic function and its applications},
  journal = {The Annals of Statistics},
  volume  = {5},
  number  = {1},
  pages   = {88--97},
  year    = {1977},
  doi     = {10.1214/aos/1176343742}
}

@article{johnson1949systems,
  author  = {Johnson, N. L.},
  title   = {Systems of frequency curves generated by methods of translation},
  journal = {Biometrika},
  volume  = {36},
  number  = {1-2},
  pages   = {149--176},
  year    = {1949},
  doi     = {10.1093/biomet/36.1-2.149}
}

@article{daniels1954saddlepoint,
  author  = {Daniels, H. E.},
  title   = {Saddlepoint approximations in statistics},
  journal = {The Annals of Mathematical Statistics},
  volume  = {25},
  number  = {4},
  pages   = {631--650},
  year    = {1954},
  doi     = {10.1214/aoms/1177728652}
}

@article{blinnikov1998expansions,
  author  = {Blinnikov, S. and Moessner, R.},
  title   = {Expansions for nearly {G}aussian distributions},
  journal = {Astronomy and Astrophysics Supplement Series},
  volume  = {130},
  number  = {1},
  pages   = {193--205},
  year    = {1998},
  doi     = {10.1051/aas:1998221}
}

@article{provost2005moment,
  author  = {Provost, Serge B.},
  title   = {Moment-based density approximants},
  journal = {Mathematica Journal},
  volume  = {9},
  number  = {4},
  pages   = {727--756},
  year    = {2005}
}

@techreport{jcgm100,
  author      = {{Joint Committee for Guides in Metrology}},
  title       = {Evaluation of measurement data --- Guide to the expression of uncertainty in measurement},
  number      = {JCGM 100:2008},
  institution = {BIPM, IEC, IFCC, ILAC, ISO, IUPAC, IUPAP and OIML},
  year        = {2008},
  note        = {GUM 1995 with minor corrections},
  doi         = {10.59161/JCGM100-2008E}
}

@techreport{jcgm101,
  author      = {{Joint Committee for Guides in Metrology}},
  title       = {Evaluation of measurement data --- Supplement 1 to the ``Guide to the expression of uncertainty in measurement'' --- Propagation of distributions using a {M}onte {C}arlo method},
  number      = {JCGM 101:2008},
  institution = {BIPM, IEC, IFCC, ILAC, ISO, IUPAC, IUPAP and OIML},
  year        = {2008},
  doi         = {10.59161/JCGM101-2008}
}

@article{rajan2016benchmark,
  author  = {Rajan, Arvind and Kuang, Ye Chow and Ooi, Melanie Po-Leen and Demidenko, Serge N.},
  title   = {Benchmark test distributions for expanded uncertainty evaluation algorithms},
  journal = {IEEE Transactions on Instrumentation and Measurement},
  volume  = {65},
  number  = {5},
  pages   = {1022--1034},
  year    = {2016},
  doi     = {10.1109/TIM.2015.2507418}
}

@article{rajan2018moment,
  author  = {Rajan, Arvind and Kuang, Ye Chow and Ooi, Melanie Po-Leen and Demidenko, Serge N. and Carstens, Herman},
  title   = {Moment-constrained maximum entropy method for expanded uncertainty evaluation},
  journal = {IEEE Access},
  volume  = {6},
  pages   = {4072--4082},
  year    = {2018},
  doi     = {10.1109/ACCESS.2017.2787736}
}

@article{morello2020gum,
  author  = {Morello, Rosario},
  title   = {{GUM}-based decisional criteria to make decisions in presence of measurement uncertainty},
  journal = {IEEE Transactions on Instrumentation and Measurement},
  volume  = {69},
  number  = {8},
  pages   = {5511--5522},
  year    = {2020},
  doi     = {10.1109/TIM.2019.2963581}
}

@book{kunchenko2002polynomial,
  title     = {Polynomial Parameter Estimations of Close to {G}aussian Random Variables},
  author    = {Kunchenko, Yuriy P.},
  publisher = {Shaker Verlag},
  address   = {Aachen},
  year      = {2002},
  note      = {in English}
}

@book{kunchenko2005approximation,
  title     = {Polynomials of Approximation in a Space with a Generating Element},
  author    = {Kunchenko, Yuriy P.},
  publisher = {Naukova Dumka},
  address   = {Kyiv},
  year      = {2005},
  note      = {in Ukrainian}
}

@book{kunchenko2006stochastic,
  title     = {Stochastic Polynomials},
  author    = {Kunchenko, Yuriy P.},
  publisher = {Naukova Dumka},
  address   = {Kyiv},
  year      = {2006},
  pages     = {275},
  note      = {in Russian}
}

@article{zabolotnii2026parametrically,
  title   = {Parametrically adaptive transition polynomial: a signed-parity continuous-alpha extension of {K}unchenko stochastic polynomials},
  author  = {Zabolotnii, Serhii},
  journal = {arXiv preprint arXiv:2605.14610},
  year    = {2026}
}

@article{zabolotnii2026variance,
  title   = {Variance-reduced manifold sampling via polynomial-maximization density estimation},
  author  = {Zabolotnii, Serhii},
  journal = {arXiv preprint arXiv:2605.19938},
  year    = {2026}
}

@article{csiszar1975divergence,
  title   = {{$I$}-divergence geometry of probability distributions and minimization problems},
  author  = {Csisz{\'a}r, Imre},
  journal = {The Annals of Probability},
  volume  = {3},
  number  = {1},
  pages   = {146--158},
  year    = {1975}
}

@book{golan1996maximum,
  title     = {Maximum Entropy Econometrics: Robust Estimation with Limited Data},
  author    = {Golan, Amos and Judge, George and Miller, Douglas},
  publisher = {John Wiley \& Sons},
  address   = {Chichester},
  year      = {1996}
}

@article{tsallis1988possible,
  title   = {Possible generalization of {B}oltzmann--{G}ibbs statistics},
  author  = {Tsallis, Constantino},
  journal = {Journal of Statistical Physics},
  volume  = {52},
  number  = {1--2},
  pages   = {479--487},
  year    = {1988},
  doi     = {10.1007/BF01016429}
}

@article{jones1993simple,
  title   = {Simple boundary correction for kernel density estimation},
  author  = {Jones, M. C.},
  journal = {Statistics and Computing},
  volume  = {3},
  number  = {3},
  pages   = {135--146},
  year    = {1993},
  doi     = {10.1007/BF00147776}
}

@article{efromovich2010orthogonal,
  title   = {Orthogonal series density estimation},
  author  = {Efromovich, Sam},
  journal = {Wiley Interdisciplinary Reviews: Computational Statistics},
  volume  = {2},
  number  = {4},
  pages   = {467--476},
  year    = {2010}
}

@book{shohat1943problem,
  author    = {Shohat, James A. and Tamarkin, Jacob D.},
  title     = {The Problem of Moments},
  series    = {Mathematical Surveys},
  volume    = {1},
  publisher = {American Mathematical Society},
  address   = {Providence, RI},
  year      = {1943}
}

@book{akhiezer1965classical,
  author    = {Akhiezer, Naum I.},
  title     = {The Classical Moment Problem and Some Related Questions in Analysis},
  publisher = {Oliver \& Boyd},
  address   = {Edinburgh},
  year      = {1965}
}

@book{barndorff1978information,
  author    = {Barndorff-Nielsen, Ole},
  title     = {Information and Exponential Families in Statistical Theory},
  publisher = {Wiley},
  address   = {Chichester},
  year      = {1978}
}

@book{kapur1989maximum,
  author    = {Kapur, Jagat Narain},
  title     = {Maximum-Entropy Models in Science and Engineering},
  publisher = {Wiley Eastern},
  address   = {New Delhi},
  year      = {1989}
}

@article{zellner1988calculation,
  author  = {Zellner, Arnold and Highfield, Richard A.},
  title   = {Calculation of maximum entropy distributions and approximation of marginal posterior distributions},
  journal = {Journal of Econometrics},
  volume  = {37},
  number  = {2},
  pages   = {195--209},
  year    = {1988},
  doi     = {10.1016/0304-4076(88)90002-4}
}

@article{wu2003calculation,
  author  = {Wu, Ximing},
  title   = {Calculation of maximum entropy densities with application to income distribution},
  journal = {Journal of Econometrics},
  volume  = {115},
  number  = {2},
  pages   = {347--354},
  year    = {2003},
  doi     = {10.1016/S0304-4076(03)00114-3}
}

@article{rockinger2002entropy,
  author  = {Rockinger, Michael and Jondeau, Eric},
  title   = {Entropy densities with an application to autoregressive conditional skewness and kurtosis},
  journal = {Journal of Econometrics},
  volume  = {106},
  number  = {1},
  pages   = {119--142},
  year    = {2002},
  doi     = {10.1016/S0304-4076(01)00092-6}
}

@article{aleagobe2023length,
  author  = {Aleagobe, Hussain A. A. and Fakoor, Vahid and Jomhoori, Sarah},
  title   = {Maximum entropy and empirical likelihood in length-biased setting: a comparison study},
  journal = {Communications in Statistics---Simulation and Computation},
  year    = {2023},
  volume  = {52},
  number  = {6},
  pages   = {2439--2452},
  doi     = {10.1080/03610918.2021.1908555}
}

@article{tohme2024messy,
  author  = {Tohme, Tony and Sadr, Mohsen and Youcef-Toumi, Kamal and Hadjiconstantinou, Nicolas G.},
  title   = {{MESSY} estimation: maximum-entropy based stochastic and symbolic density estimation},
  journal = {Transactions on Machine Learning Research},
  year    = {2024},
  url     = {https://openreview.net/forum?id=Y2ru0LuQeS}
}

@article{soize2023fourier,
  author  = {Soize, Christian},
  title   = {Probabilistic learning constrained by realizations using a weak formulation of {Fourier} transform of probability measures},
  journal = {Computational Statistics},
  year    = {2023},
  volume  = {38},
  number  = {4},
  pages   = {1879--1925},
  doi     = {10.1007/s00180-022-01300-w}
}

@article{nghiem2026relaxed,
  author  = {Nghiem, Thi Lich and Mar{\'e}chal, Pierre},
  title   = {On maximum entropy density estimation with relaxed moment constraints},
  journal = {Entropy},
  year    = {2026},
  volume  = {28},
  number  = {3},
  pages   = {282},
  doi     = {10.3390/e28030282}
}

\end{document}